\newtheorem{thm}{Theorem}[section]
\newtheorem*{thm*}{Theorem}
\newtheorem{lemma}[thm]{Lemma}
\newtheorem{prop}[thm]{Proposition}
\newtheorem{cor}{Corollary}[thm]
\newtheorem*{rmk*}{Remark}
\def\dist{\mathrm{dist}}
\def\m{{\bf m}}
\def\n{{\bf n}}
\def\w{{\bf w}}
\def\lam0{{\lambda+i0}}
\numberwithin{equation}{section}
\begin{document}

\title{Localisation for non-monotone Schr\"odinger operators}

\author{Alexander Elgart\textsuperscript{1}, Mira Shamis\textsuperscript{2}, and Sasha Sodin\textsuperscript{3}}
\footnotetext[1]{Department
of Mathematics, Virginia Tech., Blacksburg, VA, 24061, USA.  E-mail: aelgart@vt.edu.  Supported in part by NSF
grant DMS--0907165.}
\footnotetext[2]{Institute for Advanced Study,
Einstein Dr., Princeton, NJ 08540, USA, and Princeton University,
Princeton, NJ 08544, USA. E-mail: mshamis@princeton.edu. Supported by NSF
grants DMS-0635607 and PHY-1104596.}
\footnotetext[3]{Institute for
Advanced Study, Einstein Dr., Princeton, NJ 08540, USA. E-mail:
sodinale@ias.edu. Supported by NSF under agreement DMS-0635607.}
\maketitle

\begin{abstract}
We study localisation effects of strong disorder on the spectral and
dynamical properties of (matrix and scalar) Schr\"odinger operators with
non-monotone  random potentials, on the $d$-dimensional lattice.
Our results include dynamical localisation, i.e.\ exponentially
decaying bounds on the transition amplitude in the mean.  They are
derived through the study of fractional moments of the resolvent,
which are finite due to resonance-diffusing effects of the
disorder. One of the byproducts of the analysis is a nearly optimal
Wegner estimate. A particular example of the class of systems
covered by our results is the discrete alloy-type Anderson model.

\end{abstract}

\section{Introduction}
\subsection{Random Schr\"odinger operators}
The prototypical model for the study of localisation properties of quantum states of
single electrons in disordered solids is the Anderson Hamiltonian $H_A$, which acts on
$\ell^2(\mathbb{Z}^d)$ by
\[ (H_A\psi)({\bf n}) = v({\bf n}) \psi({\bf n})
    + g^{-1} \sum_{\text{${\bf m}$ adjacent to ${\bf n}$}} \psi({\bf n})~, \]
where the entries $v({\bf n})$ of the potential  are random and independent.

The basic phenomenon, named Anderson localisation after the physicist P. W. Anderson,
is that disorder can cause localisation of electron states, which manifests itself in
time evolution (non-spreading of wave packets), (vanishing of) conductivity in
response to electric field,
Hall currents in the presence of both magnetic and electric field, and statistics
of the spacing between nearby energy levels. The first property
implies spectral localisation, i.e.\ the spectral measure of $H_A$ is almost
surely pure point, and almost sure exponential decay of eigenfunctions.

These properties are known to hold for  $H_A$ in each of the
following cases: 1) high disorder (the coupling constant $g$ is
large), 2) extreme energies, 3) weak disorder away from the spectrum
of the unperturbed operator, and 4) one dimension, $d = 1$.

Historically, the first proof of spectral localisation was given by
Goldsheid, Molchanov and Pastur \cite{GMP}, for a one-dimensional continuous random
Schr\"odinger operator.

In higher dimension, the absence of diffusion was first  established in 1983 by Fr\"ohlich
and Spencer \cite{FS} using multi-scale analysis. Their approach has led to a multitude of
results on localisation for a wide range of problems. The reader is referred to the
monograph of Stollmann \cite{Stollmann} or the recent lecture notes
of W.~Kirsch \cite{Kirsch} for a review of the history of the
subject and a gentle introduction to the multi-scale analysis ---
which is not used here.

One of the ingredients of multi-scale analysis is the regularity of the {\em integrated
density of states}, the (distribution function of the) average of the spectral measure over the randomness.

Ten years later Aizenman and Molchanov \cite{AM} introduced an alternative method for the
proof of localisation, known as  the fractional moment method, which has also found numerous
applications. In particular, in \cite{A}, Aizenman introduced the notion of eigenfunction
correlator, which, combined with the fractional moment method, allowed him to give the
first proof of dynamical localisation. We refer to the lecture notes of Stolz \cite{Stolz2010}
and Aizenman and Warzel \cite{AW} for a survey of subsequent developments.

In the fractional moment method, an a priori estimate on the diagonal elements of the resolvent
$(H_A - \lambda)^{-1}$ plays a key r\^ole in the underlying analysis.

In many situations, regularity of the integrated density of states follows from
the regularity of the distribution of the potential. This was first
proved by Wegner \cite{W}, therefore regularity estimates on the density
of states are called Wegner estimates. An essential ingredient in his argument
is the monotone dependence of the spectrum of $H_A$ on the random variables
$v({\bf n})$. A modification of this argument was applied by Aizenman and Molchanov
to give an a priori bound on the average of $|(H_A-\lambda-i0)^{-1}(x,x)|^s$.

The monotone dependence of the spectrum of the random variables is also used
in the fractional moment's proof of dynamical localisation (via a variant of spectral averaging).

Recently, several challenging problems (listed below) arose in different contexts,
in which the dependence of the spectrum on the random variables is not monotone.
In this paper, we develop a strategy to prove localisation which
is applicable to some of these models.



\vspace{.5cm}\noindent
One close relative of the original Anderson Hamiltonian $H_A$ is the random {\em alloy-type} model, in which the potential
$V({\bf n})$ at a site $\n \in \mathbb{Z}^d$ is obtained from independent random variables $v({\bf m})$
via the formula
\begin{equation}\label{eq:alloy} V({\bf n}) = \sum_{{\bf k}\in\mathbf{\Gamma}} a_{\bf n-k} v({\bf k})~,\end{equation}
where the index ${\bf k}$ takes values in some sub-lattice $\mathbf{\Gamma}$ of $\mathbb{Z}^d$. If all the coefficients
$a_{\bf k}$ have the same sign, the system is  monotone, since the dependence of the spectrum on $V$ is monotone.
Localisation in such systems is well understood by now, even in the continuum setting. The
existing technology is however not well suited to the non-monotone case,
i.e.\ when $a_{\bf k}$  are not all of the same sign. Mathematically,
the problem becomes especially acute when $\sum a_{\bf k} = 0$.

There is no physically compelling reason for a random tight binding alloy model to be
monotone, and  the natural question is whether Anderson
localisation still holds if one breaks the monotony.

\vspace{.5cm}\noindent
Non-monotone models also naturally appear in the class of {\it block operators}.
In one such model, introduced by Fr\"ohlich, and studied by Bourgain in \cite{B},
the matrix-valued potential is given by $V({\bf n}) = U({\bf n})^* A U({\bf n})$, where $A$ is a
fixed self-adjoint $r \times r$ matrix, and $U({\bf n})$ are independently chosen
according to the Haar measure on $SU(r)$. Bourgain proved a volume-dependent
Wegner estimate and Anderson localisation near the edges of the spectrum
using methods from complex analysis.

Another class of non-monotone random block operators, associated with the 
Bogoliubov -- de Gennes symmetry classes, was studied by Kirsch, Metzger, and 
M\"uller in \cite{KMM} and by  Gebert and M\"uller in \cite{GM}. The former paper establishes the Wegner estimate for such random operators in a neighborhood of the inner band edges, by recovering some form of the monotonicity.
The latter paper uses the bootstrap multi-scale analysis to  prove dynamical localization in the same regime.

\vspace{2mm}\noindent
The original motivation for this work was to study a problem suggested by Tom Spencer, in which the
matrix-valued potential is of the form
\[ V({\bf n}) = \left( \begin{array}{cc} v({\bf n}) & a \\ a & - v({\bf n}) \end{array} \right)~, \]
where the variables $v({\bf n})$ are independent and identically distributed. If the distribution of
$v({\bf n})$ has bounded density, the eigenvalue distribution of a single $V({\bf n})$ is $1/2$-H\"older;
this is optimal, since the density of the eigenvalue distribution diverges as $|\lambda \mp a|^{-1/2}$
at the energies $\pm a$.  Spencer conjectured that the integrated density of states for the full Hamiltonian
is also at least $1/2$-H\"older.

\vspace{.5cm}\noindent
Electromagnetic Schr\"odinger operators with random magnetic field, the random displacement model, random block operators and
Laplace-Beltrami operators with random metrics are other examples
of systems with non-monotone parameter dependence which were intensively studied recently.

We refer to the paper of Elgart, Kr\"uger, Tautenhahn, and Veseli\'c \cite{EKTH} for a survey of
recent results on the  sign indefinite alloy-type models with ${\mathbf \Gamma}={\mathbb Z}^d$ and
the bibliography pertaining to some of the models mentioned in the previous paragraph. 

\vspace{.5cm} {\it Summary of results.} For the last few years there has been a continuous 
effort to bring the understanding of models with non-monotone dependence on the 
randomness to a same level as the one for
monotone models. In this paper, we present a method to prove Anderson localisation
and a Wegner estimate for several non-monotone models,  achieving this goal. Theorem~\ref{thm} and its corollaries pertain
to a class of models with matrix-valued potentials; when applied to Spencer's model, it shows that,
under some assumptions on the distribution of $v({\bf n})$, the integrated density of states is
$(1/2 - \epsilon)$-H\"older for any $\epsilon > 0$, at large coupling $g$. Unfortunately, Theorem~\ref{thm}
does not directly apply to Fr\"ohlich's model.

Theorem~\ref{thm:alloy} and its corollaries establish Anderson localisation and a Wegner estimate
for the alloy-type model (\ref{eq:alloy}), in the case that $a$ is finitely supported.

Our argument can be viewed as a further augmentation  of the
fractional moment method of Aizenman--Molchanov \cite{AM}.  In
particular, Proposition~\ref{prop1} is a modification of
\cite[(2.25)]{AM}, whereas Proposition~\ref{prop3} is a version of
the decoupling estimates \cite[Lemmata~2.3,3.1]{AM}. The 
innovation of this work is that we do not rely on an a priori
estimate on the moments of diagonal resolvent elements; instead, we
prove such an estimate in parallel with localisation. 
We also propose an argument which allows to deduce dynamical localisation
directly from the resolvent estimates, and which works in the non-monotone
setting as well as in the monotone one.\footnote{After this work was completed, we have learned from Victor Chulaevsky that a similar argument was used earlier by Germinet and Klein in the context of the multi-scale analysis, see \cite[Theorem~1, Corollary~1]{GK}.}

\vspace{.5cm}
{\it Relation to some past and present works.}
For the one dimensional continuum alloy-type random models the proof of the complete Anderson localization
was first given by Stolz \cite{Stolz2000}.

Outside the spectrum of the unperturbed operator (corresponding to the random potential being switched off)
one can obtain Lipschitz regularity of the integrated density of states by reducing the problem to the monotone case.
The optimal Wegner estimate in this case was established by Combes, Hislop, and Klopp \cite{CHK} (in the continuum, but their
argument is equally applicable in the discrete setting). This input can be used to prove Anderson localisation in the
regimes of  extreme energies and weak disorder away from the spectrum of the unperturbed operator (for the latter regime
for the continuum models this result goes back to Klopp \cite{Klopp}).

Our work covers the remaining perturbative setting, namely the high disorder regime, where we prove complete localisation.

Recently, Bourgain (private communication) devised a different approach that allows to prove $s$-regularity
of the density of states for a wide class of non-monotone
models which includes Fr\"ohlich's model, as well as some
of the models we consider in this note.

\subsection{Notation and statement of results}

Let $\mathcal{G} = (\mathcal{V}, \mathcal{E})$ be a graph with degree at most $\kappa$;
the set of vertices (sites) $\mathcal{V}$ may be either finite or countable. The main example
is the lattice $\mathcal{G} = \mathbb{Z}^d$ (where $\kappa = 2d$), however, the greater
generality does not require additional effort here. For $x, y \in \mathcal{V}$, denote
by $\mathrm{dist}(x, y)$ the length of the shortest path connecting $x$ to $y$; when
$\mathcal{G} = \mathbb{Z}^d$,
\[ \mathrm{dist}(x, y) = \|x-y\|_1~.\]

Let
\[ v : \Omega \times \mathcal{V} \longrightarrow \mathbb{R} \]
be a collection of independent, identically distributed random variables, where
$(\Omega, \mathbb{P})$ is a probability space, and we assume that the distribution
$\mu$ of every $v(x)$
\begin{enumerate}
\item[{\bf \underline{A1}}] is $\alpha$-regular for some $\alpha > 0$, meaning that
$\mu[t - \epsilon, t + \epsilon] \leq C_{\bf A1} \epsilon^\alpha$ for any
$\epsilon > 0$ and $t \in \mathbb{R}$;
\item[{\bf \underline{A2}}] has a finite $q$-moment for some $q>0$, meaning that
$\int |x|^q d\mu(x) \leq C_{\bf A2}$.
\end{enumerate}
For example, the Gaussian distribution and the uniform distribution on a finite
interval satisfy {\bf A1} with $\alpha = 1$ and {\bf A2} with any $q > 0$.

We shall denote the expectation by $\langle \cdot \rangle$ and the expectation
over the distribution of one $v(x)$ by $\langle \cdot \rangle_{v(x)}$.

In the electron gas approximation the system of electrons in a crystal is modeled
by a gas of Fermions moving on a lattice. The excitations of the system are described
by an effective one-body Hamiltonian $H$, which consists of a short-range hopping term
and a local (single site) potential. Each site $x$ of the lattice will be assumed to
have $k$ internal  degrees of freedom.

\vspace{.5cm}
\noindent
{\it Single site (matrix) potential}:
For any $x \in \mathcal{V}$, define a Hermitian matrix
\[ V(x) = v(x) A(x) + B(x)~,\]
where the Hermitian $k \times k$ matrices $A(x)$ and $B(x)$ satisfy
\begin{enumerate}
\item[{\bf \underline{B1}}] $\|A(x)\|,\|A(x)^{-1}\| \le C_{\bf B1}$;
\item[{\bf \underline{B2}}] $\|B(x)\|\le C_{\bf B2}$.
\end{enumerate}

\vspace{.5cm}
\noindent
{\it Hopping}: For every ordered pair $(x,y)\in \mathcal{V}\times \mathcal{V}$ of adjacent sites
(i.e.\ $(x, y) \in \mathcal{E}$) we introduce a $k \times k$ matrix ({\em kernel}) $K(x, y)$ so
that
\begin{enumerate}
\item[{\bf \underline{B3}}] $K(y, x)=K(x, y)^*$ and $\|K(x, y)\|\leq C_{\bf B3}$.
\end{enumerate}

\vspace{.5cm}
We are now in position to introduce our one-particle Hamiltonian.
Namely, let $H$ be a random operator acting on $\ell^2(\mathcal{V}) \otimes \mathbb{C}^k$
(the space of square-summable functions $\psi: \mathcal{V} \to \mathbb{C}^k$)
\begin{equation}\label{eq:defH}
(H \psi)(x) = V(x) \psi(x) + g^{-1} \sum_{y \sim x} K(x,y)\psi(y)~,
\end{equation}
where $g>0$ is a coupling constant, and the sum is over all $y \in \mathcal{V}$
such that $(x, y) \in \mathcal{E}$.

Let $G_\lambda = (H - \lambda)^{-1}$ be the resolvent of $H$, $\lambda \notin \mathbb{R}$. It is known
that the limit $G_{\lam0} = \lim_{\epsilon \to +0} G_{\lambda + i \epsilon}$ exists for almost every $\lambda \in \mathbb{R}$. In the following,
$\langle \| G_{\lambda + i0}(x, y) \|^s \rangle$ can a priori be formally interpreted as
\[ \lim_{\epsilon \to +0} \langle \| G_{\lambda + i\epsilon}(x, y) \|^s \rangle~;\]
a posteriori, $G_{\lam0}$ is finite almost surely, and
\[ \lim_{\epsilon \to +0} \langle \| G_{\lambda + i\epsilon}(x, y) \|^s \rangle
    = \langle \| \lim_{\epsilon \to +0} G_{\lambda + i\epsilon}(x, y) \|^s \rangle~.\]

\begin{thm}\label{thm} Let $0 < s \leq \frac{\alpha q}{2k\alpha + kq}$.  There exists
$C>0$ that may depend on $\alpha$, $q$, $C_{\bf A1}$--$C_{\bf B3}$
and $s$ such that for any $\lambda \in \mathbb{R}$ and any $g \geq C \kappa^{1/s} / (1 + |\lambda|)$
\[ \langle \| G_{\lambda + i0}(x, y) \|^s \rangle
    \leq \frac{C}{(1+|\lambda|)^s} \left( \frac{C\kappa}{g^s (1 + |\lambda|)^s} \right)^{\dist(x,y)}~.  \]
\end{thm}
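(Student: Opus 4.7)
The plan is to follow the fractional moment method of Aizenman--Molchanov, adapted to the matrix-valued non-monotone setting. Applying $(H-\lambda)G_{\lam0} = I$ at site $x$, with $V_\lambda(x) := v(x) A(x) + B(x) - \lambda$, gives
\[
V_\lambda(x)\,G_{\lam0}(x,y) = \delta_{xy} I - g^{-1}\sum_{z \sim x} K(x,z)\,G_{\lam0}(z,y).
\]
Invertibility of $A(x)$ (assumption {\bf \underline{B1}}) lets me invert $V_\lambda(x)$ almost surely. Multiplying through, taking norms, and using the subadditivity $\|\sum_i a_i\|^s \leq \sum_i \|a_i\|^s$ (valid for $s \leq 1$), one obtains for $x \neq y$
\[
\|G_{\lam0}(x,y)\|^s \leq g^{-s}\|V_\lambda(x)^{-1}\|^s \sum_{z \sim x}\|K(x,z)\|^s\,\|G_{\lam0}(z,y)\|^s.
\]

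To take expectations, both $\|V_\lambda(x)^{-1}\|^s$ and $\|G_{\lam0}(z,y)\|^s$ depend on $v(x)$, so I would invoke the decoupling estimate (Proposition~\ref{prop3}), the matrix-valued analogue of the Aizenman--Molchanov decoupling, to separate the two factors at the cost of a bounded constant. The analytic core is then a uniform single-site bound
\[
\sup_W \big\langle \|(v(x) A(x) + W)^{-1}\|^s \big\rangle_{v(x)} \leq \frac{C}{(1+|\lambda|)^s},
\]
the supremum being over Hermitian matrices $W$ encoding the effective self-energy. This bound rests on two facts. First, since $\det(v A(x) + W)$ is a polynomial of degree $k$ in $v$ with leading coefficient $\det A(x)$ bounded below by $C_{\bf B1}^{-k}$, the singularities of $\|(vA(x)+W)^{-1}\|$ in $v$ are confined to at most $k$ points on the real axis, with vanishing order that can degenerate at eigenvalue crossings (the mechanism behind Spencer's $1/2$-H\"older density). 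Assumption {\bf \underline{A1}} integrates these singularities, yielding the admissible range of $s$ after accounting for the factor $k$ and the possible doubling of the local order by a critical point; this is what the combined exponent $s \leq \alpha q/(2k\alpha + kq)$ quantifies. Assumption {\bf \underline{A2}} supplies the missing tail control: a H\"older splitting of the $v$-integration into a bulk region (where $\|V_\lambda(x)^{-1}\| \lesssim (1+|\lambda|)^{-1}$) and a tail region of $q$-small probability extracts the $(1+|\lambda|)^{-s}$ prefactor.

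Equipped with the single-site bound, I iterate the recursion along a shortest path from $x$ to $y$; each of the $\dist(x,y)$ steps contributes a multiplicative factor of order $C\kappa/(g^s(1+|\lambda|)^s)$, coming from the sum over the (at most $\kappa$) neighbors combined with the single-site expectation via Proposition~\ref{prop1}. Terminating the iteration at $y$ by the same bound applied to the diagonal entry there recovers exactly the stated estimate. The hypothesis $g \geq C\kappa^{1/s}/(1+|\lambda|)$ is precisely what renders the per-step factor less than one and makes the iteration contractive.

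The main obstacle, already signaled in the introduction, is that, absent monotonicity, a priori control on $\langle \|G_{\lam0}(x,x)\|^s \rangle$ is not available from a classical Wegner-type input, whereas the standard decoupling step presupposes such control. The innovation is to establish this a priori diagonal bound \emph{jointly} with the off-diagonal decay: Propositions~\ref{prop1} and \ref{prop3} are closed against each other in a bootstrap, so that $\langle \|G_{\lam0}(u,u)\|^s \rangle \leq C/(1+|\lambda|)^s$ emerges simultaneously with the exponential decay, both becoming available precisely in the large-disorder regime $g \gtrsim \kappa^{1/s}/(1+|\lambda|)$.
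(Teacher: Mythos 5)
The single-site bound at the heart of your argument is not available in the form you state, and it is precisely the obstruction this paper is built to avoid. As written, $\sup_W \langle \|(v(x)A(x)+W)^{-1}\|^s\rangle_{v(x)} \le C(1+|\lambda|)^{-s}$ cannot hold: the left-hand side does not depend on $\lambda$ if the supremum runs over all Hermitian $W$ (fix $W$ and let $|\lambda|\to\infty$). What your recursion actually requires, once you invert $V_\lambda(x)$ and try to average, is control of $(V(x)-\lambda-\Sigma)^{-1}$ with $\Sigma$ the self-energy, which is non-Hermitian and not bounded as $\mathrm{Im}\,\lambda\to0$; a bound uniform in $\Sigma$, let alone one carrying the $(1+|\lambda|)^{-s}$ gain, is exactly what is problematic for non-monotone matrix potentials. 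The paper never inverts this object: it writes the resolvent equation in the \emph{second} argument, keeps the product $G_\lambda(x,y)(V(y)-\lambda)$ intact, and proves the lower bound of Lemma~\ref{lem1}, $\langle\|G_\lambda(x,y)(V(y)-\lambda)\|^s\rangle \ge \hat C^{-1}(1+|\lambda|)^s\langle\|G_\lambda(x,y)\|^s\rangle$, via Cauchy--Schwarz plus the comparability of fractional moments of ratios of polynomials (Proposition~\ref{prop3}); the only single-site inverse ever averaged is $(V(y)-\lambda)^{-1}$ with no self-energy, cf.~(\ref{eq:Vinv}), where the $\lambda$-decay is explicit because $B(y)$ is bounded by {\bf B2}. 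Your proposed ``decoupling of $\|V_\lambda(x)^{-1}\|^s$ from $\|G(z,y)\|^s$ via Proposition~\ref{prop3}'' is likewise only a gesture: that proposition evaluates moments of polynomial ratios in terms of their roots, and converting it into a product-splitting with constants uniform in the uncontrolled environment-dependent coefficients is not automatic and is not how the paper uses it. (Incidentally, the factor $2$ in $2k\alpha$ comes from the Cauchy--Schwarz step at exponent $2s$, not from a doubling of vanishing order at eigenvalue crossings.)

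The second gap is the closure of the iteration, where your write-up is internally inconsistent: if the uniform-in-$W$ single-site bound of your middle paragraph were true, the a priori diagonal bound would follow at once by conditioning on $(v(z))_{z\neq x}$, classical Aizenman--Molchanov style, and no bootstrap would be needed; your final paragraph instead concedes that no a priori bound is available and appeals to Propositions~\ref{prop1} and~\ref{prop3} being ``closed against each other,'' without saying how the loop closes --- and that closure is the actual content of the paper's proof. Concretely, one works on a finite graph with $\mathrm{Im}\,\lambda\neq0$, so all fractional moments are finite a priori; Proposition~\ref{prop1} then forces the maximum over $y$ of $\langle\|G_\lambda(x,y)\|^s\rangle$ to be attained at $y=x$ once $g^s\ge C\kappa/(1+|\lambda|)^s$ (otherwise $M\le M/2$, Corollary~\ref{cor:max}); plugging $y=x$ back into Proposition~\ref{prop1} gives a self-improving inequality yielding $\langle\|G_\lambda(x,x)\|^s\rangle\le C(1+|\lambda|)^{-s}$ (Corollary~\ref{cor2}); only then does one iterate $\dist(x,y)$ times, terminate with these two bounds, and pass to infinite volume and $\mathrm{Im}\,\lambda\to0$. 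Without this maximum-principle/self-improvement step (or a genuine substitute), your phrase ``terminating the iteration at $y$ by the same bound applied to the diagonal entry'' has no support: every iterate of your recursion still contains resolvent expectations of exactly the kind you are trying to bound.
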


\vspace{1mm}\noindent
Let us state some corollaries for the homogeneous setting; for simplicity, assume that
$\mathcal{G}=\mathbb{Z}^d$ (we denote the vertices of $\mathbb{Z}^d$ by $\m,\n,\cdots$)
We also assume that
\begin{enumerate}
\item[{\bf \underline{C}}] \quad $A(\m)\equiv A$, $B(\m)\equiv B$, $K(\m,\n)\equiv K(\m-\n)$.
\end{enumerate}
The density of states $\rho$ is defined as the average of the spectral measure corresponding to $H$:
\[ \int f(\lambda) d\rho(\lambda) = \frac{1}{k} \textrm{tr} \, \langle f(H)({\bf n}, {\bf n}) \rangle~, \]
where $\mathrm{tr}$ stands for the trace. The integrated density of states is the distribution
function $\lambda \mapsto \rho(-\infty, \lambda]$ of $\rho$.

The assumption {\bf C} guarantees that these definitions do not depend on the choice of the
vertex ${\bf n} \in \mathbb{Z}^d$.

Theorem~\ref{thm} implies the following Wegner-type estimate:

\begin{cor}\label{cor:weg}
Assume {\bf C}. If $g \geq C d^{1/s}/(1+|\lambda|)$, then the integrated density of states is locally
$s$-H\"older at $\lambda$ for
\[ s = \frac{\alpha q}{2k\alpha + kq} = \frac{\alpha}{k \left(1 + \frac{2\alpha}{q}\right)}~,\]
uniformly in $g \to \infty$:
\[ \rho[\lambda - \epsilon, \lambda + \epsilon] \leq C (1 + |\lambda|)^{-s} \epsilon^s~. \]
\end{cor}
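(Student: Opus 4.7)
The plan is to convert the diagonal case ($x=y=\n$) of Theorem~\ref{thm} into a modulus of continuity for the averaged spectral measure, via the Poisson kernel together with a straightforward interpolation between the deterministic resolvent bound and the fractional moment.

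First I would apply the elementary pointwise inequality
\[
\mathbb{1}_{[\lambda-\epsilon,\lambda+\epsilon]}(E) \;\leq\; \frac{2\epsilon^{2}}{(E-\lambda)^{2}+\epsilon^{2}} \;=\; 2\epsilon\,\textrm{Im}\,(E-\lambda-i\epsilon)^{-1}
\]
to $E=H$, take the $(\n,j)$ diagonal matrix element, sum over the $k$ internal indices, and average over the disorder. Using the definition of $\rho$ and dividing by $k$, this gives
\[
\rho[\lambda-\epsilon,\lambda+\epsilon] \;\leq\; \frac{2\epsilon}{k}\big\langle \textrm{Im}\,\textrm{tr}\,G_{\lambda+i\epsilon}(\n,\n)\big\rangle.
\]

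Next I would bridge the gap between the first moment just obtained and the fractional moment provided by Theorem~\ref{thm}. The deterministic bound $\|G_{\lambda+i\epsilon}(\n,\n)\|\leq\|(H-\lambda-i\epsilon)^{-1}\|\leq 1/\epsilon$ lets me interpolate
\[
|\textrm{Im}\,\textrm{tr}\,G_{\lambda+i\epsilon}(\n,\n)| \;\leq\; k\,\|G_{\lambda+i\epsilon}(\n,\n)\|^{1-s}\,\|G_{\lambda+i\epsilon}(\n,\n)\|^{s} \;\leq\; k\,\epsilon^{s-1}\,\|G_{\lambda+i\epsilon}(\n,\n)\|^{s},
\]
so that the whole estimate collapses to
\[
\rho[\lambda-\epsilon,\lambda+\epsilon] \;\leq\; 2\epsilon^{s}\,\big\langle\|G_{\lambda+i\epsilon}(\n,\n)\|^{s}\big\rangle.
\]
Invoking Theorem~\ref{thm} at $x=y=\n$ (applied at imaginary part $\epsilon$) with $s = \alpha q/(2k\alpha+kq)$ bounds the right-hand side by $2C\epsilon^{s}/(1+|\lambda|)^{s}$, which is exactly the claim. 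The hypothesis {\bf C} enters only to make $\rho$ well-defined; the condition $g\geq Cd^{1/s}/(1+|\lambda|)$ is inherited verbatim from Theorem~\ref{thm} with $\kappa=2d$, and the bound does not deteriorate as $g\to\infty$.

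The main technical point to address is that Theorem~\ref{thm} is stated at the boundary $\lambda+i0$, whereas the argument uses the bound at the strictly positive imaginary part $\epsilon$. This is standard in the fractional moment method: the proof produces an estimate uniform for all $\eta>0$ before the limit $\eta\to 0+$ is taken, and so the very same inequality applies at $\eta=\epsilon$ with the same constant. I expect no other genuine obstacle: the computation is local, requires no decoupling or resonance analysis beyond what is already inside Theorem~\ref{thm}, and is essentially a half-page once the uniform-in-$\eta$ version of the theorem is quoted.
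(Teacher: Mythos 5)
The paper does not spell out a proof of Corollary~\ref{cor:weg}, stating only that it follows from Theorem~\ref{thm}; your argument --- bounding the indicator of an interval by the Poisson kernel to express $\rho[\lambda-\epsilon,\lambda+\epsilon]$ through $\mathrm{Im}\,\mathrm{tr}\,G_{\lambda+i\epsilon}(\n,\n)$, interpolating with the deterministic bound $\|G_{\lambda+i\epsilon}\|\le \epsilon^{-1}$ to reduce the first moment to the fractional moment, and then invoking the diagonal estimate (which, as you correctly note, holds uniformly for $\mathrm{Im}\,\lambda>0$, cf.\ Corollary~\ref{cor2} and the finite-volume/limit discussion at the start of Section~2) --- is the standard route in the fractional moment literature and is evidently what the authors had in mind. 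Your derivation is correct.
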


\noindent In particular, for any distribution with bounded density and finite moments the integrated density of states
is $1/(k+\epsilon)$-H\"older for any $\epsilon>0$.

Next, Theorem~\ref{thm} implies dynamical and spectral Anderson localisation:

\begin{cor}\label{cor:dyn} Assume {\bf C}. Let $I$ be a finite interval of energies, and let
\[ g \geq \frac{C d^{1/s}}{1 + \min_{\lambda \in I} |\lambda|}~. \]
Then, for any $\m \neq \n \in \mathbb{Z}^d$,
\begin{equation}\label{eq:dynam}
\langle \sup_{t \geq 0} \left|  e^{i t H_I}(\m, \n) \right|  \rangle
    \leq C \dist(\m,\n)^{2d} \left( \frac{C d}{g^s (1+|\lambda|)^s}\right)^\frac{s \, \dist(\m,\n)}{8}~,
\end{equation}
where $H_I = P_I H P_I$, $P_I$ is the spectral projector corresponding to $I$. Therefore the spectrum of $H$ in $I$ is
almost surely pure point.
\end{cor}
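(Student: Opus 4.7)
The plan is to deduce~\eqref{eq:dynam} from the fractional-moment decay provided by Theorem~\ref{thm} through an eigenfunction-correlator reduction, and then to extract the spectral statement by a standard RAGE / Borel--Cantelli argument. The crucial point is that the passage from resolvent decay to dynamical localisation has to avoid spectral averaging, which relies on monotonicity of the spectrum in the random variables and is unavailable here; this is exactly the issue circumvented by the Germinet--Klein style argument alluded to in the introduction.

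For the kinematic first step, given any Borel $f$ with $\|f\|_\infty\le 1$ and $\mathrm{supp}(f)\subset I$, the spectral expansion $f(H)=\sum_E f(E)|\psi_E\rangle\langle\psi_E|$, the triangle inequality, and subadditivity of $x\mapsto x^s$ on $[0,\infty)$ for $s\in(0,1]$ give
\[
\|f(H)(\m,\n)\|^s \;\le\; \sum_{E\in\sigma(H)\cap I}\|\psi_E(\m)\|^s\,\|\psi_E(\n)\|^s \;=:\; \mathcal{Q}_s(\m,\n;I).
\]
Specialising to $f(\lambda)=\chi_I(\lambda)\,e^{it\lambda}$, taking the supremum in $t\ge 0$, and interpolating against the trivial bound $\sup_t\|e^{itH_I}(\m,\n)\|\le 1$ (which makes $\sup_t\|\cdot\|\le (\sup_t\|\cdot\|)^s$) reduces~\eqref{eq:dynam} to an estimate of the same shape for $\langle\mathcal{Q}_s(\m,\n;I)\rangle$.

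For this estimate I would truncate $H$ to a finite box $\Lambda\supset\{\m,\n\}$ with both points at distance $\ge \dist(\m,\n)/4$ from $\partial\Lambda$ (so that $|\Lambda|=O(\dist(\m,\n)^d)$), use that the residue of the finite-volume resolvent $G^\Lambda_{\lambda+i\epsilon}$ at an eigenvalue $E$ equals $\psi^\Lambda_E(\m)\psi^\Lambda_E(\n)^*$, and apply the geometric resolvent identity across $\partial\Lambda$ to split $G^\Lambda(\m,\n)$ as a sum of products $G^\Lambda(\m,y)\,K(y,y')\,G(y',\n)$ indexed by $(y,y')\in\partial\Lambda$. Combined with the elementary bound $\int_I|\lambda-E-i\epsilon|^{-s}\,d\lambda\le C_s|I|^{1-s}$ and a Cauchy--Schwarz splitting in the randomness, this produces an inequality schematically of the form
\[
\bigl\langle\mathcal{Q}_s^\Lambda(\m,\n;I)\bigr\rangle \;\le\; C\,|\Lambda|^2\sup_{\lambda\in I,\,y,y'\in\partial\Lambda}\bigl\langle\|G^\Lambda_\lambda(\m,y)\|^{s'}\bigr\rangle^{1/2}\bigl\langle\|G^\Lambda_\lambda(y',\n)\|^{s'}\bigr\rangle^{1/2}
\]
for an appropriately chosen $s'\asymp s$. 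Each factor has distance $\ge\dist(\m,\n)/4$ and decays by Theorem~\ref{thm} at rate $\bigl(C\kappa/(g^{s'}(1+|\lambda|)^{s'})\bigr)^{\dist(\m,\n)/4}$; careful bookkeeping of the successive losses (from $s\to s'$, from the subadditivity step, and from Cauchy--Schwarz) yields the exponent $s\,\dist(\m,\n)/8$ of~\eqref{eq:dynam}, while the factor $|\Lambda|^2$ becomes the polynomial prefactor $\dist(\m,\n)^{2d}$. Taking $\Lambda\nearrow\mathbb Z^d$ via Fatou completes the bound for the infinite-volume operator.

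The hardest step is the derivation of the displayed inequality in the non-monotone setting, where the standard change of variable $v(x)\mapsto E$ used to convert $\int_I\langle\|G_\lambda\|^s\rangle d\lambda$ into $\langle\mathcal{Q}_s\rangle$ is unavailable because $V(x)=v(x)A(x)+B(x)$ is not a scalar multiple of the identity; one must instead rely on the residue structure of $G_\lambda$ for real $\lambda$ together with the a priori fractional-moment bound. Once~\eqref{eq:dynam} is in hand, the spectral statement is standard: Fatou gives $\sum_\n\sup_{t\ge0}\|e^{itH_I}(\m,\n)\|<\infty$ almost surely, so the orbit $\{e^{itH_I}(\delta_\m\otimes u):t\ge 0\}$ is precompact in $\ell^2(\mathcal V)\otimes\mathbb C^k$ for every $u\in\mathbb C^k$, and the RAGE theorem then rules out any continuous spectral component of $H_I$ in $I$.
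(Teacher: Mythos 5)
Your overall skeleton is right — eigenfunction correlators, finite-volume truncation, geometric resolvent identity across the boundary of a smaller box, exponential decay from Theorem~\ref{thm}, and a RAGE/Kunz--Souillard argument at the end. The spectral conclusion is the same as the paper's. But the step you yourself flag as hardest, namely the passage from the eigenfunction correlator to resolvent moments without spectral averaging, is left as a black box, and the schematic inequality you write down is not something the methods you list actually deliver.

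Concretely: you work with the fractional correlator $\mathcal{Q}_s(\m,\n;I)=\sum_E\|\psi_E(\m)\|^s\|\psi_E(\n)\|^s$ and propose to control it via the residue structure plus $\int_I|\lambda-E-i\epsilon|^{-s}\,d\lambda\le C_s|I|^{1-s}$ plus Cauchy--Schwarz in the randomness. But for $s<1$ the estimate $\|G_\lambda\|^s\le\sum_E\|M_E\|^s|\lambda-E|^{-s}$ goes the wrong way: it gives an upper bound on $\int_I\|G_\lambda\|^s\,d\lambda$ in terms of $\sum_E\|M_E\|^s$, not a lower bound, so one cannot extract $\langle\mathcal{Q}_s\rangle$ from $\int_I\langle\|G_\lambda\|^s\rangle\,d\lambda$ this way. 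A Cauchy--Schwarz split in the randomness also does not help, because the two resolvent factors in the geometric resolvent expansion share random variables unless one of them is the restriction to a sub-box disjoint from the support of the other's randomness — and even then one still needs to convert the sum over eigenvalues into something controllable by resolvent moments, which is precisely the spectral-averaging step you are trying to avoid. So the proposal does not close.

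The paper's actual mechanism is different in two respects that your proposal does not contain. First, it uses the correlator with exponent $1$, not $s$, and exploits the identity (an inequality in the block case) $Q_I^\Lambda(\m,\n)\le\lim_{\epsilon\to+0}\tfrac{\epsilon}{2}\int_I\|G^\Lambda_{\lambda+i0}(\m,\n)\|^{1-\epsilon}\,d\lambda\le k$; here the $\epsilon\to0$ limit with exponent $1-\epsilon$ selects out exactly the residues (the $O(1)$ remainder is killed by the prefactor $\epsilon$), so there is no loss and no change of variables. Second — and this is the ingredient entirely missing from your argument — the probabilistic decoupling does not come from Cauchy--Schwarz but from \cite[Prop.~5.1]{ETV}: on an event of probability at least $1-C'R^{2d}\exp(-\gamma sR/8)$, the interval $I$ can be partitioned as $I=I_\m\cup I_\n$ so that the boundary resolvents $G^{\Lambda_\m}(\m,\cdot)$ decay uniformly over $\lambda\in I_\m$, and likewise for $\n$. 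On that good event the deterministic identity above, combined with the geometric resolvent identity, gives the exponential bound; on the bad event the a priori bound $Q_I^\Lambda\le k$ is used and costs only the (already exponentially small) probability. This good/bad event structure, not a Cauchy--Schwarz split, is what replaces spectral averaging in the non-monotone setting, and without it the proposal has a genuine gap.
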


The first part of the last corollary follows from Theorem~\ref{thm}, (\ref{eq:obv}), and Theorem~\ref{thm:dyn}.
The ``therefore'' part follows from the summability of the right-hand side of (\ref{eq:dynam})
via the Kunz--Souillard theorem \cite[Theorem~9.21]{CFKS}.

\subsection{Extensions}

\paragraph{Alloy-type models.} Consider the operator $H$ with potential (\ref{eq:alloy}) acting on
$\ell^2(\mathbb{Z}^d)$. Let $\mathcal{B}_\n$ be the set of $v({\bf m})$ for which $a_{\n-{\bf m}} \neq 0$.
We will impose the following  assumptions on the random potential:
\begin{enumerate}
\item the set $\mathcal{B}_\n$ is  non empty for all $\n$;
\item the cardinality $k = \# \{ \m \, | \, a_{\bf \m} \neq 0 \} < \infty$;
\item the distribution of $v(\m)$ satisfies {\bf A1} and {\bf A2}.
\end{enumerate}

\begin{thm}\label{thm:alloy} Let $0 < s < \frac{\alpha q}{2k\alpha + kq}$. There exists $C>0$
such that for any $\lambda \in \mathbb{R}$ and any $g \geq C d^{1/s}/(1+|\lambda|)$
\[ \langle |G_\lam0(\m, \n)|^s \rangle
    \leq \frac{C}{(1+|\lambda|)^s} \left( \frac{Cd}{g^s (1+|\lambda|)^s }\right)^{\dist(\m,\n)}~. \]
\end{thm}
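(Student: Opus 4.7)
The plan is to reduce Theorem~\ref{thm:alloy} to the matrix case of Theorem~\ref{thm} by recasting the alloy Hamiltonian as a matrix-valued operator on a coarser lattice. Since each scalar random variable $v({\bf k})$ influences the potential at exactly the $k$ sites $\{{\bf k}+{\bf m} : a_{\bf m} \neq 0\}$, the alloy model is morally a block operator with block size $k$, and the invertible matrix $A$ required by assumption {\bf B1} should arise from the vector of nonzero coefficients $(a_{\bf m})_{{\bf m} \in \mathrm{supp}(a)}$.

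Concretely, I would pick a sublattice $\Lambda \subset \mathbb{Z}^d$ whose index is $k$ (or, if necessary, a bounded multiple thereof) together with a set of coset representatives $F = \{f_1,\dots,f_k\}$ of $\mathbb{Z}^d/\Lambda$ satisfying $F \subset \mathrm{supp}(a)$; such a $\Lambda$ can be built by choosing its generators sparse enough that no two points of $\mathrm{supp}(a)$ are $\Lambda$-equivalent, possibly after inflating $k$ slightly and absorbing the cost into the constants. Under the isometry $\ell^2(\mathbb{Z}^d) \cong \ell^2(\Lambda) \otimes \mathbb{C}^k$ the alloy Hamiltonian becomes a matrix-valued operator $\widetilde H$ on $\ell^2(\Lambda)$ whose on-site potential at $\kappa \in \Lambda$ splits as
\[ \widetilde V(\kappa) = v(\kappa)\, A + \widetilde B(\kappa), \qquad A = \mathrm{diag}(a_{f_1},\ldots,a_{f_k}), \]
with $A$ invertible because $a_{f_j} \neq 0$ for each $j$ by the choice $F \subset \mathrm{supp}(a)$, while $\widetilde B(\kappa)$ is a diagonal matrix whose entries are linear combinations of $\{v({\bf k}) : {\bf k} \neq \kappa\}$, each involving at most $k-1$ summands. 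The hopping between adjacent $\Lambda$-blocks is deterministic and inherits the bounded-norm property {\bf B3} from the original nearest-neighbour Laplacian.

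The main obstacle is that $\widetilde B(\kappa)$ is itself random and not deterministically bounded, so assumption {\bf B2} of Theorem~\ref{thm} fails as stated. I would handle this by conditioning on the external randomness $\{v({\bf k}) : {\bf k} \in \Lambda \setminus \{\kappa\}\}$, on which $\widetilde B(\kappa)$ is deterministic: on the event $\{\|\widetilde B(\kappa)\| \leq M\}$ we invoke Theorem~\ref{thm} with an effective $C_{\bf B2} = M$, and on the complement we pay the tail bound $\mathbb{P}[\|\widetilde B(\kappa)\| > M] \lesssim M^{-q}$ supplied by {\bf A2} and the compact support of $a$. A H\"older optimisation in $M$ then yields the desired fractional moment estimate at the price of a small loss in the exponent, which is precisely why Theorem~\ref{thm:alloy} requires the \emph{strict} inequality $s < \alpha q/(2k\alpha+kq)$ whereas Theorem~\ref{thm} permits equality. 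Equivalently, one can revisit the proof of Theorem~\ref{thm} directly (Propositions~\ref{prop1} and~\ref{prop3}) and verify that deterministic boundedness of $B$ enters only through an $L^q$ control of $\|B\|$, which the alloy model provides through {\bf A2}. Finally, translating from $\widetilde H$ back to $H$: writing $\m = \kappa + f$ and $\n = \kappa' + f'$ with $\kappa,\kappa' \in \Lambda$ and $f,f' \in F$, we have $|G_\lam0(\m,\n)| \leq \|\widetilde G_\lam0(\kappa,\kappa')\|$, while $\dist(\m,\n)$ and the graph distance on $\Lambda$ differ by at most $\mathrm{diam}(F) = O(1)$, an $O(1)$ shift that can be absorbed into the constant $C$ in the statement.
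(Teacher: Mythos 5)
Your approach is genuinely different from the paper's, but it has several serious gaps.

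The paper does not re-index the lattice. It keeps the scalar Hamiltonian, runs the same iteration as in Proposition~\ref{prop1}, and replaces Lemma~\ref{lem1} by the multivariate decoupling Lemma~\ref{lem1'}: one averages over \emph{all} the variables $v(\mathbf{k})$, $\mathbf{k}\in\mathcal{B}_{\bf n}$, simultaneously, using the fact that $G_\lambda(\m,\n)$ is, in each variable separately, a ratio of two degree-$\le k$ polynomials, and the multivariate extension of Proposition~\ref{prop3} is established by induction on the number of variables. No block reduction is attempted.

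Your block-encoding reduction faces the following obstructions. First, the geometry does not cooperate: it is not true that $\mathrm{supp}(a)$ can be arranged to be a complete set of coset representatives for an index-$k$ sublattice. Already in $d=1$ with $\mathrm{supp}(a)=\{0,1,4\}$ the residues modulo $3$ are $\{0,1,1\}$, so no index-$3$ sublattice of $\mathbb{Z}$ has this set as a fundamental domain; and since all index-$3$ sublattices of $\mathbb{Z}$ are $3\mathbb{Z}$, there is no fix at index $k$. Inflating to a larger index $k'$ forces coset representatives outside $\mathrm{supp}(a)$, hence zero diagonal entries in $A$, which kills $\bf B1$ (and, because those entries of $\widetilde V(\kappa)-\lambda$ are then $v(\kappa)$-independent, also $\bf B1'$ can fail when $\lambda$ sits near such a deterministic entry).

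Second, and more fundamentally, $\widetilde B(\kappa)$ is a linear combination of $v(\mathbf{k})$ for $\mathbf{k}\ne\kappa$, so the family $\{\widetilde V(\kappa)\}_{\kappa\in\Lambda}$ is \emph{not} independent: $\widetilde B(\kappa)$ and $\widetilde V(\kappa')$ share the variable $v(\kappa')$ for nearby $\kappa'$. Theorem~\ref{thm} assumes i.i.d.\ site randomness with deterministic $A(x),B(x)$. Your conditioning fix is not compatible with the iterative structure of Proposition~\ref{prop1}: the decoupling step at site $\kappa$ conditions on $\{v(\mathbf{k}):\mathbf{k}\ne\kappa\}$, but the next step of the iteration must condition on a \emph{different} set of variables, and after the first conditioning the potential at neighbouring sites is no longer random. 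One cannot simply invoke Theorem~\ref{thm} as a black box on the conditioned measure.

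Third, the tail-bound/H\"older device controls $\|\widetilde B(\kappa)\|$ at one site at a time, whereas the exponential decay estimate requires control at every site along a chain of length $\dist(\m,\n)$; the union bound over the chain would destroy the exponential rate unless $M$ grows with the distance, and then the ``absorb into constants'' step breaks. Finally, the rationale you offer for the strict inequality $s<\alpha q/(2k\alpha+kq)$ (loss in a H\"older optimisation over $M$) does not reflect where the constraint actually comes from in the paper; the multivariate decoupling in Lemma~\ref{lem1'} and the accompanying proposition are what impose the bound on $s$, with no additional loss of the kind you describe.

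The underlying intuition (an alloy model with $k$ coefficients is ``morally'' a block model of block size $k$) is reasonable as a heuristic, but it cannot be turned into a reduction to Theorem~\ref{thm} as stated. You would essentially have to reprove the key estimates of Theorem~\ref{thm} for a correlated, unboundedly-random $B$, which is no easier than proving Lemma~\ref{lem1'} directly.
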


Similarly to Corollaries~\ref{cor:weg},\ref{cor:dyn}, one can deduce a Wegner estimate and
Anderson localisation.
\paragraph{Relaxing the covering condition.} The assumption  $\bf B1$ is usually referred to as a covering condition.
In our analysis, it enters in the proof of Lemma \ref{lem1}. In particular, all our results are still valid
(albeit with the less sharp bound  on the underlying localisation length) if one replaces  $\bf B1$ with
\begin{enumerate}
\item[{\bf \underline{B1$'$}}] $\langle \| (V(y)-\lambda)^{-1} \|^{s}\rangle_{v(y)} \leq C g^{\alpha}$, with $\alpha< s$.
\end{enumerate}
For a fixed non zero matrix $A(y)$ and a generic matrix $B(y)$ the estimate $\bf B1'$ holds true for $g$ large enough.
For instance $\bf B1'$ is applicable (with $\alpha=0$) for
\[ A(y) = \left( \begin{array}{ccc} 1 & 0 & 0 \\
0 & 0 &0 \\ 0 & 0 &-1 \end{array} \right)~;
\quad  B(y) = \left( \begin{array}{ccc} 0 & 1 & 0 \\ 1 & 0 &2 \\ 0 & 2 &0 \end{array} \right)~.\]

\vspace{5mm}
\paragraph{Acknowledgments.} We are grateful to Tom Spencer for suggesting the problem and for helpful
discussions, to Michael Aizenman for comments and suggestions, in particular, for suggesting
to use eigenfunction correlators in the proof of dynamical localisation, to G\"unter Stolz
for remarks on a preliminary version of this paper, and to Victor Chulaevsky for bringing
the work of Germinet and Klein \cite{GK} to our attention.

\section{Proof of theorems}

In the proof of Theorem~\ref{thm}, we assume that the graph $\mathcal{G}$ is finite
and $\lambda \notin \mathbb{R}$. The estimates will be uniform in $\# \mathcal{V} \to \infty$
($\#$ denotes cardinality) and $\mathrm{Im}\, \lambda \to 0$, therefore the statement for infinite
graphs and real $\lambda$ can be deduced as follows. First, an infinite graph can be approximated
by its finite pieces; the matrix elements of the resolvent corresponding to the finite pieces
converge to the matrix elements of the resolvent corresponding to the infinite graph, yielding
the same estimate for $\lambda \notin \mathbb{R}$. Then one can let $\mathrm{Im} \, \lambda$ go
to zero.

The proof of Lemma~\ref{lem1}  below will be postponed until Section~\ref{s:decoupling}.

\begin{prop}\label{prop1} For any $s \leq \frac{\alpha q}{2k\alpha + kq}$ there exists $C>0$ (depending
on $s$ and the constants in the assumptions) such that
for any $\lambda \notin \mathbb{R}$
\[ \langle \|G_\lambda(x, y)\|^s \rangle
    \leq \frac{C}{2(1+|\lambda|)^s} \left\{ g^{-s} \sum_{z \sim y} \langle \|G_\lambda(x, z)\|^s \rangle
    + \delta_{xy} \right\}~,\]
where
\[ \delta_{xy} = \begin{cases} 1, &x = y \\ 0, &x \neq y \end{cases} \]
is the Kronecker $\delta$.
\end{prop}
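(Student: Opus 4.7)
\vspace{.5cm}\noindent\textbf{Proof proposal.} The natural starting point is the second resolvent identity at the site $y$. Extracting the $(x,y)$-block of $G_\lambda(H-\lambda)=I$ gives
\[ G_\lambda(x,y)(V(y)-\lambda)+g^{-1}\sum_{z\sim y}G_\lambda(x,z)K(z,y)=\delta_{xy}\,I_k~,\]
which rearranges to
\[ G_\lambda(x,y)=\Big[\delta_{xy}\,I_k-g^{-1}\sum_{z\sim y}G_\lambda(x,z)K(z,y)\Big](V(y)-\lambda)^{-1}~.\]
Taking operator norms, iterating $(a+b)^s\leq a^s+b^s$ (valid for $0<s\leq 1$) and using~{\bf B3}, I obtain the pointwise bound
\[ \|G_\lambda(x,y)\|^s\leq \|(V(y)-\lambda)^{-1}\|^s\Big(\delta_{xy}+C_{\bf B3}^s\,g^{-s}\sum_{z\sim y}\|G_\lambda(x,z)\|^s\Big)~.\]

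Next I would take expectation with respect to $v(y)$ first, holding the remaining variables fixed. Were $\|G_\lambda(x,z)\|$ independent of $v(y)$, factoring out a prefactor $\langle\|(V(y)-\lambda)^{-1}\|^s\rangle_{v(y)}$ and bounding it by $C/(1+|\lambda|)^s$ would suffice. This single-site estimate, which I expect to be (a consequence of) Lemma~\ref{lem1}, should be obtained by writing $V(y)-\lambda=A(y)\bigl[v(y)I+A(y)^{-1}(B(y)-\lambda)\bigr]$ and using {\bf B1} to reduce the problem to averaging $\|(v(y)I+M)^{-1}\|^s$ for a deterministic matrix $M$. The small-denominator events (when $v(y)$ is close to one of the $k$ eigenvalues of $-M$) are then controlled by {\bf A1}, while the large-$|v(y)|$ tail, and hence the factor $(1+|\lambda|)^{-s}$, is controlled by {\bf A2}; the sharp exponent $s\leq\alpha q/(2k\alpha+kq)$ arises as the natural trade-off between these two estimates and explains the factor $k$ in the denominator.

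The genuine obstacle is that $G_\lambda(x,z)$ does depend on $v(y)$ whenever $z\sim y$. The key structural observation, a matrix-valued variant of the Aizenman--Molchanov mechanism, is that via the Schur complement at the $k$-dimensional fibre above $y$, the map $v(y)\mapsto G_\lambda(x,z)$ is a matrix linear fractional transformation, whose coefficients depend only on the remaining randomness. This is precisely the setting to which Proposition~\ref{prop3} applies, yielding a decoupling of the form
\[ \langle\|(V(y)-\lambda)^{-1}\|^s\,\|G_\lambda(x,z)\|^s\rangle_{v(y)}\leq C\,\langle\|(V(y)-\lambda)^{-1}\|^s\rangle_{v(y)}\,\langle\|G_\lambda(x,z)\|^s\rangle_{v(y)}, \]
possibly up to a harmless replacement of $v(y)$ by a surrogate value in the second factor. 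Combining this with the single-site bound from Lemma~\ref{lem1}, then averaging over the remaining randomness and absorbing all numerical constants into the prefactor $C/2$, would produce the stated inequality.

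I anticipate that the decoupling step will be the main source of technical work, for the rank-$k$ (rather than rank-$1$) character of the perturbation at $y$ forces Proposition~\ref{prop3} to be established in a genuinely matrix setting, and one must carefully track the interaction between operator norms and the $s$-th power. This is also the place where the invertibility assumption {\bf B1} genuinely enters, through the guarantee that varying the scalar $v(y)$ actually moves the full spectrum of $V(y)$ — without this the single-site bound degenerates and the matrix linear fractional structure collapses.
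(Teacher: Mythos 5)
Your starting identity is the same as the paper's, but the routes diverge immediately after, and the divergence matters.

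The paper leaves $(V(y)-\lambda)$ on the left-hand side: from
$G_\lambda(x,y)(V(y)-\lambda)=-g^{-1}\sum_{z\sim y}G_\lambda(x,z)K(z,y)+\delta_{xy}$
one takes $\|\cdot\|^s$ and expectations to bound $\langle\|G_\lambda(x,y)(V(y)-\lambda)\|^s\rangle$ from above by the neighbouring Green function moments, with \emph{no} correlated product ever appearing in Proposition~\ref{prop1} itself. Lemma~\ref{lem1} then converts this back, providing the lower bound
$\langle\|G_\lambda(x,y)(V(y)-\lambda)\|^s\rangle\geq\hat C^{-1}\langle\|G_\lambda(x,y)\|^s\rangle(1+|\lambda|)^s$.
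This is the whole proof of Proposition~\ref{prop1}; all of the hard analysis is encapsulated in Lemma~\ref{lem1}, and Lemma~\ref{lem1} is \emph{not} the single-site bound $\langle\|(V(y)-\lambda)^{-1}\|^s\rangle_{v(y)}\leq C(1+|\lambda|)^{-s}$ that you anticipated. That bound (equation~\eqref{eq:Vinv} in the paper) is only one ingredient inside Lemma~\ref{lem1}'s proof; the other, essential ingredient is a Cauchy--Schwarz step followed by a reverse-H\"older inequality obtained from Proposition~\ref{prop3}, which handles the correlation between $G_\lambda(x,y)(V(y)-\lambda)$ and $(V(y)-\lambda)^{-1}$. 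Misidentifying the content of Lemma~\ref{lem1} is the main gap in your account of the argument.

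Your route --- multiplying through by $(V(y)-\lambda)^{-1}$ first, then taking norms --- forces you to average the correlated product $\|(V(y)-\lambda)^{-1}\|^s\|G_\lambda(x,z)\|^s$ directly, and you propose an Aizenman--Molchanov-type decoupling $\langle fg\rangle_{v(y)}\leq C\langle f\rangle_{v(y)}\langle g\rangle_{v(y)}$. That inequality is indeed provable from Proposition~\ref{prop3}: each factor is (entrywise) comparable to a ratio of polynomials in $v(y)$ of degree at most $k$, so both sides are comparable to a product of $(1+|\text{root}|)^{\pm s}$ factors. However, the product $fg$ has numerator and denominator of degree up to $2k$, and the hypotheses of Proposition~\ref{prop3} (namely $rm<\alpha$ and $q\geq(sl+rm)\frac{\alpha}{\alpha-rm}$) then force $s\leq\frac{q\alpha}{2k(q+2\alpha)}$ --- roughly half the range $s\leq\frac{q\alpha}{k(q+2\alpha)}$ claimed in the proposition. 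The paper avoids this loss precisely because it never multiplies two unrelated degree-$k$ rational functions: it proves Lemma~\ref{lem1} for small $s$ via Cauchy--Schwarz (which does temporarily raise the power to $2s$), and then \emph{separately} extends to the full range by comparing $s$- and $s'$-moments of the \emph{same} degree-$k$ ratio of polynomials for different exponents. Without that two-stage structure, a direct decoupling argument along your lines yields only the halved range. Your intuition for why $k$ appears in the denominator, and for the roles of {\bf B1}, {\bf A1}, {\bf A2}, is accurate.
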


\begin{proof} By definition of $G_\lambda$,
\[ G_\lambda(x, y) (V(y) - \lambda) = - g^{-1} \sum_{z \sim y} G_\lambda(x, z) K(z,y) + \delta_{xy}~.\]
Therefore
\[ \langle \|G_\lambda(x, y) (V(y) - \lambda)\|^s \rangle
    \leq C_{\bf B3}^s g^{-s} \sum_{z \sim y} \langle \|G_\lambda(x, z)\|^s \rangle + \delta_{xy}~. \]

\begin{lemma}\label{lem1} For $s \leq \frac{\alpha q}{2k\alpha + kq}$, there exists  $\hat C$ (depending on $s$ and the constants in the assumptions) such that
\[ \langle \| G_\lambda(x, y) (V(y) - \lambda) \|^s \rangle
    \geq {\hat C}^{-1} \langle \|G_\lambda(x, y)\|^s \rangle \, (1+|\lambda|)^s~. \]
\end{lemma}

\noindent The proposition follows.
\end{proof}

\begin{cor}\label{cor:max}  For any $s \leq \frac{\alpha q}{2k\alpha + kq}$, we have
\[ \max_y \langle \|G_\lambda(x, y)\|^s \rangle = \langle \|G_\lambda(x, x)\|^s \rangle~,\]
provided $g^s \geq C\kappa/(1+|\lambda|)^s$.
\end{cor}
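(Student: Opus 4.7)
The strategy is to apply Proposition~\ref{prop1} at a vertex where the maximum is attained and turn the resulting inequality into a contraction. Let
\[ M := \max_{y \in \mathcal{V}} \langle \|G_\lambda(x,y)\|^s \rangle, \]
which exists since $\mathcal{G}$ is assumed finite and $\lambda \notin \mathbb{R}$. I would prove $M = \langle \|G_\lambda(x,x)\|^s \rangle$ by showing that $x$ must lie in the argmax set.

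Suppose toward a contradiction that every maximizer $y_0$ of $\langle \|G_\lambda(x,\cdot)\|^s \rangle$ differs from $x$. Then Proposition~\ref{prop1} applied at $y = y_0$ has a vanishing Kronecker term, so
\[ M = \langle \|G_\lambda(x,y_0)\|^s \rangle \leq \frac{C}{2(1+|\lambda|)^s g^s} \sum_{z \sim y_0} \langle \|G_\lambda(x,z)\|^s \rangle. \]
Bounding each of the at most $\kappa$ summands on the right by $M$ gives
\[ M \leq \frac{C \kappa}{2 g^s (1+|\lambda|)^s}\, M. \]
Provided the absolute constant in the hypothesis $g^s \geq C\kappa/(1+|\lambda|)^s$ is chosen at least twice the constant $C$ appearing in Proposition~\ref{prop1}, the prefactor on the right is at most $1/2$, so $M \leq M/2$. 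This forces $M = 0$; but then trivially $\langle \|G_\lambda(x,x)\|^s \rangle \leq M = 0$ and the desired equality $\max = \langle \|G_\lambda(x,x)\|^s \rangle$ still holds. Otherwise (when $M > 0$), we have a genuine contradiction, so some maximizer must coincide with $x$, and we are done.

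\textbf{Main obstacle.} There is essentially no obstacle beyond bookkeeping: the proof is a one-line contraction argument. The only care needed is to match the constant in the hypothesis $g^s \geq C \kappa /(1+|\lambda|)^s$ with the constant produced by Proposition~\ref{prop1}, which is a routine adjustment. The conclusion of the corollary is then used repeatedly in the sequel to close the recursion for $\langle \|G_\lambda(x,y)\|^s \rangle$ along paths from $x$ to $y$, as indicated by the geometric factor in Theorem~\ref{thm}.
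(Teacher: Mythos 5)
Your proof is correct and coincides with the paper's argument: apply Proposition~\ref{prop1} at a maximizer $y \neq x$ (where the Kronecker term vanishes), bound each of the at most $\kappa$ neighboring terms by $M$, and use the hypothesis $g^s \geq C\kappa/(1+|\lambda|)^s$ to obtain $M \leq M/2$. Your explicit handling of the trivial case $M = 0$ is a minor refinement over the paper's terse ``a contradiction,'' but the reasoning is otherwise the same.
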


\begin{proof}
Suppose the maximum $M$ is attained at $y \neq x$. Then
\[\begin{split}
M &= \langle \|G_\lambda(x, y)\|^s \rangle
    \leq \frac{C}{2 g^s (1+|\lambda|)^s} \sum_{z \sim y} \langle \|G_\lambda(x, z)\|^s \rangle \\
  &\leq \frac{C \kappa M}{2 g^s (1+|\lambda|)^s} \leq \frac{CM}{2C}=\frac{M}{2}~,
\end{split}\]
a contradiction.
\end{proof}

\begin{cor}\label{cor2} For any $s \leq \frac{\alpha q}{2k\alpha + kq}$ and $g^s \geq C\kappa/(1+|\lambda|)^s$
\[ \langle \|G_\lambda(x, x)\|^s \rangle \leq \frac{C}{(1+|\lambda|)^s}~.\]
\end{cor}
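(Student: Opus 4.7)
The plan is to apply Proposition~\ref{prop1} at the diagonal point $y = x$ and then use the maximum principle from Corollary~\ref{cor:max} to close the estimate on itself. Specifically, I will start from the inequality
\[ \langle \|G_\lambda(x, x)\|^s \rangle \leq \frac{C}{2(1+|\lambda|)^s} \left\{ g^{-s} \sum_{z \sim x} \langle \|G_\lambda(x, z)\|^s \rangle + 1 \right\}, \]
which is just Proposition~\ref{prop1} with $y=x$.

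Next, I will use Corollary~\ref{cor:max}, which (under the assumption $g^s \geq C\kappa/(1+|\lambda|)^s$) says that the diagonal term dominates all off-diagonal terms with the same first argument, i.e.\ $\langle \|G_\lambda(x, z)\|^s \rangle \leq \langle \|G_\lambda(x, x)\|^s \rangle$ for every $z$. Combined with the fact that the number of neighbours of $x$ is bounded by $\kappa$, this gives
\[ \langle \|G_\lambda(x, x)\|^s \rangle \leq \frac{C}{2(1+|\lambda|)^s} \left\{ \frac{\kappa}{g^{s}} \langle \|G_\lambda(x, x)\|^s \rangle + 1 \right\}. \]

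Finally, I will absorb the term involving $\langle \|G_\lambda(x, x)\|^s \rangle$ from the right-hand side into the left-hand side. Under the assumption $g^s (1+|\lambda|)^s \geq C\kappa$, the coefficient $\frac{C\kappa}{2 g^s (1+|\lambda|)^s}$ is at most $\tfrac{1}{2}$, so rearranging yields the claimed bound
\[ \langle \|G_\lambda(x, x)\|^s \rangle \leq \frac{C}{(1+|\lambda|)^s} \]
(with the constant $C$ possibly adjusted). There is no real obstacle here: the argument is a direct self-bounding application of the two preceding results, and the only thing to watch is that the absorption step is legitimate, which it is because $\langle \|G_\lambda(x, x)\|^s \rangle$ is finite (we are working in the regime where $\lambda \notin \mathbb{R}$ and $\mathcal{G}$ is finite, as stipulated at the beginning of the section).
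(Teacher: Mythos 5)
Your proof is correct and follows exactly the paper's own argument: apply Proposition~\ref{prop1} at $y=x$, replace each off-diagonal term by the diagonal one via Corollary~\ref{cor:max}, and absorb using the hypothesis $g^s(1+|\lambda|)^s \geq C\kappa$. The remark that the absorption step requires finiteness (guaranteed here because $\mathcal{G}$ is finite and $\lambda\notin\mathbb{R}$) is a useful observation that the paper leaves implicit.
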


\begin{proof}
By Proposition~\ref{prop1} with $y = x$ and Corollary~\ref{cor:max},
\[\begin{split} \langle \|G_\lambda(x, x)\|^s \rangle
    &\leq \frac{C}{2(1+|\lambda|)^s} \left\{ g^{-s} \kappa \langle \|G_\lambda(x, x)\|^s \rangle + 1 \right\}\\
    &\leq \frac{1}{2} \langle \|G_\lambda(x, x)\|^s \rangle + \frac{C}{2 (1+|\lambda|)^s}~,\end{split}\]
therefore
\[ \langle \|G_\lambda(x, x)\|^s \rangle \leq \frac{C}{(1+|\lambda|)^s}~.\]
\end{proof}

\begin{proof}[Proof of Theorem~\ref{thm}] For $x = y$ the inequality follows from Corollary~\ref{cor2}. For
$x \neq y$ apply Proposition~\ref{prop1} $\text{dist}(x,y)$ times, and then use Corollary~\ref{cor:max} and
Corollary~\ref{cor2} to estimate every term.
\end{proof}

\begin{proof}[Proof of Theorem~\ref{thm:alloy}]

The proof follows that of Theorem~\ref{thm}. The main modification
(apart from replacing $\| \cdot \|$ with $| \cdot |$) appears in Lemma~\ref{lem1},
which has to be replaced with

\begin{lemma}\label{lem1'}
For $s \leq \frac{\alpha q}{2k\alpha + kq}$, there exists $\hat{C}$ such that
\[ \langle |G_\lambda(\m, \n)|^s |V(\n) - \lambda|^s  \rangle
    \geq \hat{C}^{-1} \langle |G_\lambda(\m, \n)|^s  \rangle (1+|\lambda|)^s~.\]
\end{lemma}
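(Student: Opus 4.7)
The strategy is to reduce Lemma~\ref{lem1'} to the same analytic estimate already used for Lemma~\ref{lem1}. The point is that although $V(\n)$ is a scalar built from $k$ independent scalars, the dependence of $H$ on any single one of them is a rank-$k$ diagonal perturbation, so after an appropriate Schur complement the situation matches the rank-$k$ matrix setup of Lemma~\ref{lem1}.

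Concretely, I would fix any ${\bf k}_0\in\mathcal{B}_\n$ (non-empty by hypothesis) and let $S=\{\n':a_{\n'-{\bf k}_0}\neq 0\}$, so that $\n\in S$ and $|S|=k$. After conditioning on the variables $\{v({\bf k}):{\bf k}\ne{\bf k}_0\}$, the operator $H-\lambda$ becomes the affine function $H'+v({\bf k}_0)\,D$ of $v({\bf k}_0)$, where $D=\sum_{\n'\in S}a_{\n'-{\bf k}_0}P_{\n'}$ is diagonal, supported on $S$, and $H'$ is independent of $v({\bf k}_0)$. Taking the Schur complement with respect to $\ell^2(S)$ (legitimate since $H_{S^cS^c}-\lambda$ does not involve $v({\bf k}_0)$) yields
\[
 P_S(H-\lambda)^{-1}P_S=\bigl(v({\bf k}_0)A+M(\lambda)\bigr)^{-1},
\]
with $A=D|_S$ a diagonal Hermitian $k\times k$ matrix whose norm and inverse-norm are bounded, and $M(\lambda)$ independent of $v({\bf k}_0)$. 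The off-diagonal block formula then gives
\[
 G_\lambda(\m,\n)=\sum_{y\in S}F(\m,y)\bigl[(v({\bf k}_0)A+M(\lambda))^{-1}\bigr](y,\n),
\]
with $F(\m,y)$ independent of $v({\bf k}_0)$, while $V(\n)-\lambda=a_{\n-{\bf k}_0}v({\bf k}_0)+c_\n$ with $c_\n$ also $v({\bf k}_0)$-independent.

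With this setup the inner integral $\int|G_\lambda(\m,\n)|^s|V(\n)-\lambda|^s\,d\mu(v({\bf k}_0))$ is of exactly the type handled by Lemma~\ref{lem1}: the fractional moment of a product of a linear function of $v({\bf k}_0)$ and matrix entries of a resolvent $(v({\bf k}_0)A+M(\lambda))^{-1}$ whose leading coefficient $A$ is bounded with bounded inverse. The same fractional-moment estimate used to prove Lemma~\ref{lem1}, which relies only on the $\alpha$-regularity (${\bf A1}$) and $q$-moment (${\bf A2}$) of $\mu$, then yields the desired lower bound for the conditional expectation; integration over the remaining variables finishes the proof. The main obstacle is the linear-algebraic bookkeeping in the Schur-complement reduction: one must verify that the factors $F(\m,y)$ and $M(\lambda)$ are indeed independent of $v({\bf k}_0)$, and that the $k\times k$ diagonal matrix $A$ inherits an analog of the covering condition ${\bf B1}$. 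Neither point is deep, but both must be made explicit before the transfer to the proof of Lemma~\ref{lem1} can be invoked verbatim, with $|\,\cdot\,|$ in place of $\|\,\cdot\,\|$.
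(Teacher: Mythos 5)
Your Schur-complement reduction is algebraically sound, but conditioning on a single variable $v(\mathbf{k}_0)$ cannot produce the $(1+|\lambda|)^s$ factor, and the transfer to Lemma~\ref{lem1} fails. Write $V(\n)-\lambda=a_{\n-\mathbf{k}_0}v(\mathbf{k}_0)+c_\n-\lambda$, where $c_\n=\sum_{\mathbf{k}\neq\mathbf{k}_0}a_{\n-\mathbf{k}}v(\mathbf{k})$ is frozen under your conditional expectation. The proof of Lemma~\ref{lem1} gets its factor of $(1+|\lambda|)^{-s}$ from the bound~\eqref{eq:Vinv}, $\langle\|(V(y)-\lambda)^{-1}\|^{2s}\rangle_{v(y)}^{1/2}\le C(1+|\lambda|)^{-s}$, and that bound hinges on the assumption $\mathbf{B2}$ that the constant term $B(y)$ is bounded; the resolvent-identity trick separates off the large $\lambda$. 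In your reduced picture the role of $B$ is played by $M(\lambda)$ (and, for the scalar factor, by $c_\n-\lambda$), which is \emph{not} bounded. Concretely, on the event that $c_\n$ happens to be close to $\lambda$, the conditional integral $\langle|V(\n)-\lambda|^{-2s}\rangle_{v(\mathbf{k}_0)}^{1/2}$ is of order one, not of order $(1+|\lambda|)^{-s}$, so the conditional version of the lemma is simply false for large $\lambda$. Since your plan is ``prove it conditionally on $\{v(\mathbf{k}):\mathbf{k}\neq\mathbf{k}_0\}$, then integrate,'' this is fatal.

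The paper's proof instead averages over \emph{all} of $\mathcal{B}_\n$ at once, so that $V(\n)$ is a genuinely random scalar and $\langle|V(\n)-\lambda|^{-s}\rangle_{\mathcal{B}_\n}\le C(1+|\lambda|)^{-s}$ follows from Proposition~\ref{prop3} (using both $\mathbf{A1}$ and $\mathbf{A2}$). After a Cauchy--Schwarz split, what remains to show is the reverse H\"older bound $\langle|G_\lambda(\m,\n)|^s\rangle_{\mathcal{B}_\n}\le C\langle|G_\lambda(\m,\n)|^{s/2}\rangle_{\mathcal{B}_\n}^2$, and this is where the genuinely new ingredient of the alloy case lies: a \emph{multivariate} extension of Proposition~\ref{prop3}, proved by induction over the up-to-$k$ variables in $\mathcal{B}_\n$, using that $G_\lambda(\m,\n)$ is a ratio of polynomials of degree at most $k$ in each $v_j$ separately. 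Your proposal contains neither the multi-variable averaging nor the multivariate reverse H\"older step, and without them the $(1+|\lambda|)^s$ gain cannot be recovered.
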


\noindent The proof is provided at the end of Section~\ref{s:decoupling}.

\end{proof}

\section{Estimates on ratios of polynomials}\label{s:decoupling}

Lemma~\ref{lem1}  will follow from

\begin{prop}\label{prop3} Let  $\mu$ be a probability measure satisfying the assumptions {\bf A1}, {\bf A2}.
Let $a_1, \cdots, a_l, b_1, \cdots, b_m \in \mathbb{C}$, and let $s, r>0$ be such
that $rm < \alpha$ and $q \geq (sl+rm) \frac{\alpha}{\alpha - rm}$. Then
\[ \int \frac{\prod_{j=1}^l |v - a_j|^s}{\prod_{i=1}^m|v-b_i|^r} \, d\mu(v)
    \asymp \frac{\prod_{j=1}^l (1+|a_j|)^s}{\prod_{i=1}^m (1  +|b_i|)^r}~,\]
where the "$\asymp$" sign means that $\text{LHS} \leq C \, \text{RHS} \leq C' \,\text{LHS}$, and the numbers
$C,C'>0$ may depend on $\alpha$, $q$, $C_{\bf A1}$, $C_{\bf A2}$, $l$, $m$, $r$, and $s$, but not on
$a_j$ and $b_i$.
\end{prop}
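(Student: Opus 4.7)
The plan is to obtain matching upper and lower bounds separately, with the main technical input being a single-factor auxiliary estimate. All constants below may depend on $\alpha$, $q$, $C_{\bf A1}$, $C_{\bf A2}$, $r$, $s$, $l$, $m$ but not on $a_j$ or $b_i$. First I would prove, for any $0<\beta<\alpha$ and $q\ge \beta\alpha/(\alpha-\beta)$, the one-factor estimate
$$\int |v-b|^{-\beta}\,d\mu(v)\le C(1+|b|)^{-\beta}.$$
One splits the integration region at $|v|=|b|/2$: on $\{|v|\le |b|/2\}$ the bound is immediate since $|v-b|\ge|b|/2$, while on $\{|v|>|b|/2\}$ H\"older with exponents $(p,p')$ satisfying $\beta p'<\alpha$ bounds the contribution by a constant times $\mu\{|v|>|b|/2\}^{1/p}$, and \textbf{A2} together with Chebyshev gives $\mu\{|v|>|b|/2\}\le C|b|^{-q}$; the numerology $q/p\ge\beta$ corresponds exactly to $q\ge\beta\alpha/(\alpha-\beta)$.

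For the upper bound on the main integral, I would apply $|v-a_j|^s\le 2^s(1+|v|)^s(1+|a_j|)^s$ to factor out $\prod_j(1+|a_j|)^s$, reducing the claim to
$$\int(1+|v|)^{sl}\prod_{i=1}^m|v-b_i|^{-r}\,d\mu(v)\le C\prod_{i=1}^m(1+|b_i|)^{-r}.$$
Next I would apply H\"older's inequality to the $m+1$ factors with exponent $p$ on each denominator piece and dual exponent $p_0=p/(p-m)$ on $(1+|v|)^{sl}$ (so that $1/p_0+m/p=1$). The polynomial factor is controlled by \textbf{A2} provided $slp/(p-m)\le q$, while each single-factor integral $\int|v-b_i|^{-rp}\,d\mu$ is bounded by $C(1+|b_i|)^{-rp}$ via the preliminary estimate applied with $\beta=rp$, which requires $rp<\alpha$ and $q\ge rp\alpha/(\alpha-rp)$. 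A direct computation shows that a $p\in(m,\alpha/r)$ meeting both constraints simultaneously exists precisely when $q\ge(sl+rm)\alpha/(\alpha-rm)$, matching the standing hypothesis.

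For the lower bound, I would exhibit a set of $\mu$-measure at least $1/2$ on which the integrand is pointwise of the correct order. Using \textbf{A2}, pick $R\ge 1$ large enough that $\mu\{|v|>R\}\le 1/4$, and using \textbf{A1}, pick $\eta>0$ small enough that $lC_{\bf A1}\eta^\alpha\le 1/4$. Setting
$$E=\bigl\{v:|v|\le R\bigr\}\setminus\bigcup_{j:\,|a_j|\le 2R}(a_j-\eta,\,a_j+\eta),$$
a union bound gives $\mu(E)\ge 1/2$. On $E$ one verifies case-by-case that $|v-a_j|\ge c(1+|a_j|)$ for every $j$ (split according to whether $|a_j|\le 2R$, where $|v-a_j|\ge\eta$ and $1+|a_j|\le 1+2R$, or $|a_j|>2R$, where $|v-a_j|\ge|a_j|/2$) and that $|v-b_i|\le R(1+|b_i|)$ for every $i$; restricting the integral to $E$ then delivers the lower bound.

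The main obstacle is the arithmetic inside the upper bound step: checking that the admissible range of H\"older exponents is non-empty exactly under the stated moment condition $q\ge(sl+rm)\alpha/(\alpha-rm)$. Everything else is a routine combination of H\"older's inequality with \textbf{A1} and \textbf{A2}; in particular the lower bound uses only \textbf{A1} and the finiteness of $\mu$, and so would go through under a much weaker moment hypothesis than the one needed for the matching upper bound.
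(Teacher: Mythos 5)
Your upper-bound argument is genuinely different from the paper's: you factor out $\prod_j(1+|a_j|)^s$ at once via $|v-a_j|\le(1+|v|)(1+|a_j|)$ and then dispose of the $m$ denominator factors through a generalized H\"older inequality combined with an auxiliary single-factor estimate, whereas the paper first reduces to coincident $a$'s and $b$'s by Cauchy--Schwarz, then to $a=0$ and $|b|>1$, and finishes with a layer-cake integral split at a carefully chosen cutoff $|b|^{\gamma}$. Your scheme is more modular and avoids the intermediate reductions; the paper's keeps all the hard analysis in one place. The lower bounds are the same idea in both treatments: restrict to a bounded region and excise small neighbourhoods of the $a_j$ of moderate size, using {\bf A1} to control the excised measure.

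There is, however, a gap exactly at the boundary of the moment hypothesis. Your one-factor estimate uses H\"older with conjugate exponents $(p,p')$ subject to $\beta p'<\alpha$ (this must be strict: {\bf A1} alone does not make $\int|v-b|^{-\alpha}\,d\mu$ finite) and to $q/p\ge\beta$. Writing $p'=p/(p-1)$, these become $p>\alpha/(\alpha-\beta)$ and $p\le q/\beta$, which are simultaneously satisfiable only when $q>\beta\alpha/(\alpha-\beta)$ \emph{strictly}, not when $q\ge\beta\alpha/(\alpha-\beta)$ as you assert. The same slack is lost in the multi-factor H\"older step, so the argument as written establishes Proposition~\ref{prop3} only under the strict inequality $q>(sl+rm)\alpha/(\alpha-rm)$, not under the stated $q\ge(sl+rm)\alpha/(\alpha-rm)$. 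This matters: Theorem~\ref{thm} allows $s$ up to and including $\alpha q/(2k\alpha+kq)$, which is precisely the equality case of Proposition~\ref{prop3}. The repair is simple and brings you back in line with the paper's mechanism: prove the one-factor lemma by a layer-cake split rather than H\"older,
\[
\int_{|v|>|b|/2}|v-b|^{-\beta}\,d\mu\ \le\ T\,\mu\{|v|>|b|/2\}\;+\;C_{\bf A1}\int_T^\infty t^{-\alpha/\beta}\,dt\,,
\]
optimise $T=|b|^{q\beta/\alpha}$ to obtain $C|b|^{-q(\alpha-\beta)/\alpha}\le C|b|^{-\beta}$ which holds with equality permitted; then in the multi-factor H\"older step the choice $p=qm/(q-sl)$ closes the argument at the boundary (at this $p$ one gets $slp_0=q$ and $rp\alpha/(\alpha-rp)=q$, both admissible). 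You should also note that the split at $|v|=|b|/2$ presupposes $|b|$ bounded away from $0$; for small $|b|$ the one-factor bound follows directly from {\bf A1} since $\beta<\alpha$.
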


\begin{proof}[Proof of Lemma~\ref{lem1}]

First let us show that the statement holds for very small $s>0$; then we shall extend it to all
$s \leq \frac{\alpha q}{4\alpha + 2q}$. We shall consider the (slightly more complicated) case $x \neq y$.

For $s$ sufficiently small,
\begin{equation}\label{eq:schw}\begin{split}
&\langle \|G_\lambda(x, y)\|^s \rangle_{v(y)} \\
    &\quad\leq \langle \|G_\lambda(x, y) (V(y) - \lambda) \|^s \| (V(y)-\lambda)^{-1} \|^s \rangle_{v(y)} \\
    &\quad\leq \langle \|G_\lambda(x, y) (V(y) - \lambda)\|^{2s} \rangle_{v(y)}^{1/2}
        \langle \| (V(y)-\lambda)^{-1} \|^{2s}\rangle_{v(y)}^{1/2}~.
\end{split}\end{equation}
By the Schur--Banachiewicz formula for the inverse of a block matrix\footnote{See Henderson and Searle \cite{HS}
for the history of block matrix inversion formulae},
\[ \left( \begin{array}{cc} G_\lambda(x, x) & G_\lambda(x, y) \\ G_\lambda(y, x) & G_\lambda(y, y) \end{array} \right)
    = \left[ \left(\begin{array}{cc} V(x) & \\ & V(y) \end{array}\right) - K_{2k \times 2k} \right]^{-1}~,\]
where $K_{2k\times2k}$ is independent of $v(x), v(y)$. Applying the Schur--Banachiewicz formula once
again, we obtain:
\[ G_\lambda(x, y) = L_{k\times k} (V(y) - M_{k\times k})^{-1}
    = \frac{L_{k\times k} (V(y) - M_{k\times k})^\text{Adj}}{\det(V(y) - M_{k  \times k})}~,\]
where $L_{k\times k}$ and $M_{k\times k}$ are independent of $v(y)$, and $\text{Adj}$ denotes
the adjugate (= cofactor) matrix.
\[ G_\lambda(x, y) (V(y) - \lambda)
    = \frac{L_{k\times k} (V(y) - M_{k \times k})^\text{Adj} (V(y)-\lambda)}{\det(V(y) - M_{k \times k})}~.\]
Therefore every entry of $ G_\lambda(x, y) (V(y) - \lambda) $ is a ratio of two polynomials $Q_1, Q_2$ of degree $\le k$
with respect to the variable $v(y)$.
For sufficiently small $s>0$, Proposition~\ref{prop3} implies that for any such pair  $Q_1, Q_2$
\[ \left\{ \int \frac{|Q_1(v)|^{2s}}{|Q_2(v)|^{2s}} d\mu(v) \right\}^{1/2}
    \leq \tilde C \int \frac{|Q_1(v)|^{s}}{|Q_2(v)|^{s}} d\mu(v)~. \]
Hence
\[ \langle \|G_\lambda(x, y) (V(y) - \lambda)\|^{2s} \rangle_{v(y)}^{1/2}
    \leq k\tilde C \langle \|G_\lambda(x, y) (V(y) - \lambda)\|^{s} \rangle_{v(y)}~. \]
Proposition~\ref{prop3} also implies that for sufficiently small $s$
\begin{equation}\label{eq:Vinv} \langle \| (V(y)-\lambda)^{-1} \|^{2s}\rangle_{v(y)}^{1/2} \leq 2k(1 + |\lambda|)^{-s}~.\end{equation}
Indeed, using Proposition~\ref{prop3} we first observe that for sufficiently small $s$
\[ \langle \| (v(y) A(y)+B(y)+i)(V(y)-\lambda)^{-1} \|^{2s}\rangle_{v(y)}^{1/2} \leq 1.1k~,\]
 Using now the resolvent identity
 \[(V(y)-\lambda)^{-1}=-(i+\lambda)^{-1}+(i+\lambda)^{-1}(v(y) A(y)+B(y)+i)(V(y)-\lambda)^{-1}~,\]
 we establish \eqref{eq:Vinv}.

Returning to (\ref{eq:schw}), we obtain:
\begin{equation}\label{eq:comp}
\langle \|G_\lambda(x, y)\|^s \rangle_{v(y)}
    \leq \hat C \langle \|G_\lambda(x, y) (V(y) - \lambda)\|^{s} \rangle_{v(y)}
    (1+|\lambda|)^{-s}~.
\end{equation}
To extend this inequality to all $s \leq \frac{\alpha q}{4\alpha + 2q}$, we apply Proposition~\ref{prop3}
once again. Every entry in $G_\lambda(x, y)$ and $G_\lambda(x, y) (V(y) - \lambda)$  is a ratio
of two polynomial functions of $v(y)$ whose degree do not exceed $k$. By Proposition~\ref{prop3}, the expressions
\[ \left\{\int \frac{|Q_1(v)|^{s}}{|Q_2(v)|^{s}} d\mu(v) \right\}^{1/s} \]
are comparable as long as $q \geq \frac{2ks \alpha}{\alpha - ks}$, that is, $s \leq \frac{q\alpha}{kq+2k\alpha}$.
Therefore (\ref{eq:comp}) remains valid in this range of $s$. Averaging over $(v(z))_{z \neq y}$,
we obtain
\[ \langle \|G_\lambda(x, y)\|^s \rangle \leq \hat C \langle \|G_\lambda(x, y) (V(y) - \lambda)\|^{s} \rangle (
1+|\lambda|)^{-s}~.\]
\end{proof}

\begin{proof}[Proof of Proposition~\ref{prop3}] \hfill

\vspace{2mm}\noindent{\bf Lower bound.} Choose $R>0$ so that $\mu[-R, R] \geq 1/2$ (for example, take
$R = \max(1, 2 C_{\bf A2})^{1/q}$.) Then

\[\begin{split}
\int \frac{\prod_{j=1}^l |v - a_j|^s}{\prod_{k=1}^m|v-b_k|^r} \, d\mu(v)
    &\geq \int_{-R}^R \\
    &\geq C_1^{-1} \frac{\prod_{|a_j| \geq 2R} (1+|a_j|)^s}{\prod_{k=1}^m (1  +|b_k|)^r}
        \int_{-R}^R \prod_{|a_j| < 2R} |v - a_j|^s d\mu(v)~.
\end{split}\]
Now, for any $0 < t < 1$, the set $\{ \prod |v - a_j| \leq t\}$ can be covered by $l$ intervals
of length $C t^{1/l}$. Therefore, when $t>0$ is sufficiently small,
\[\begin{split}
\mu \left\{ \prod_{|a_j|<2R} |v - a_j| \leq t \right\}  \leq C_2 t^{\alpha/l} \leq 1/4~.
\end{split}\]
Then
\[ \int_{-R}^R \prod_{|a_j| < 2R} |v - a_j|^s d\mu(v) \geq t^s/4 \geq C_3^{-1} \geq C_4^{-1} \prod_{|a_j|<2R} (1+|a_j|)^s~. \]

\vspace{2mm}\noindent{\bf Upper bound.} Let us start with several reductions. First, it is sufficient
to consider the case $a_1 = \cdots = a_l = a$, $b_1 = \cdots = b_m = b$. This follows from the Cauchy--Schwarz inequality
\[ \int \frac{\prod_{j=1}^l |v - a_j|^s}{\prod_{k=1}^m|v-b_k|^r} \, d\mu(v)
    \leq \prod_{j=1}^l \prod_{k=1}^m \left\{ \int \frac{|v - a_j|^{sl}}{|v-b_k|^{rm}} \, d\mu(v)\right\}^{\frac{1}{lm}}~.  \]
Second,
\[ \int \frac{|v - a|^{sl}}{|v-b|^{rm}} \, d\mu(v)
    \leq C \left\{ \int \frac{|v|^{sl} \, d\mu(v)}{|v-b|^{rm}} + |a|^{sl} \int \frac{d\mu(v)}{|v-b|^{rm}}\right\}~,\]
so it is sufficient to consider the case $a = 0$. Third, we can assume that $|b|>1$, since for $|b|\leq1$ the regularity
condition {\bf A1} implies
\[ \int \frac{|v|^{sl} \, d\mu(v)}{|v-b|^{rm}}
    \leq C \left\{ \int \frac{d\mu(v)}{|v-b|^{rm-sl}} + |b|^{sl} \int \frac{d\mu(v)}{|v-b|^{rm}} \right\}
    \leq C_5 \leq \frac{C_6}{(1+|b|)^{rm}}~.  \]
Therefore we need to show that for $|b|>1$
\[ \int \frac{|v|^{sl} \, d\mu(v)}{|v-b|^{rm}} \leq C |b|^{-rm}~.\]

\noindent Let us divide the integral into two parts:
\[ \int = \int_{\big||v|-|b|\big|>|b|/2} + \int_{|b|/2 < |v| < 3|b|/2}~.\]
By {\bf A2}, the first integral is at most
\[ \left(\frac{2}{|b|}\right)^{rm} \int |v|^{sl} d\mu(v) \leq C_6 |b|^{-rm}~. \]
Let us estimate second integral.
\begin{equation}\label{eq:est4}\begin{split} \int_{|b|/2 < |v| < 3|b|/2}
    &\leq \left(\frac{3|b|}{2}\right)^{sl} \int_{|b|/2 < |v| < 3|b|/2} \frac{d\mu(v)}{|v-b|^{rm}} \\
    &\leq C_7 |b|^{sl} \int_0^\infty \mu \left\{ |b|/2 < |v| < 3|b|/2~, \, |v-b| < t^{-\frac{1}{rm}}\right\} \, dt \\
    &= C_7 |b|^{sl} \left\{ \int_0^{b^{-sl-rm}} + \int_{b^{-sl-rm}}^{b^\gamma}+ \int_{b^\gamma}^\infty \right\}~,
\end{split}\end{equation}
where $\gamma > 0$ is a number that we shall choose shortly. The first integral in (\ref{eq:est4}) is
at most $b^{-sl-rm}$. The second integral is at most
\[  b^\gamma \mu \left\{ |v|>|b|/2 \right\} \leq C_8 b^{\gamma - q} \leq C_8 b^{-sl-rm} \]
as long as
\begin{equation}\label{eq:cond.gamma1} \gamma \leq q - sl -r m~.\end{equation}
The third integral is at most
\[ \int_{b^\gamma}^\infty \mu \left\{|v-b| < t^{-\frac{1}{rm}}\right\} \, dt
    \leq C_9 \int_{b^\gamma}^\infty t^{-\frac{\alpha}{rm}}  \, dt
    \leq C_{10} |b|^{-\gamma (\frac{\alpha}{rm} - 1)}  \leq C_{11} |b|^{-sl-rm}\]
as long as
\begin{equation}\label{eq:cond.gamma2} \gamma \geq \frac{sl+rm}{\frac{\alpha}{rm}-1}~.\end{equation}
Since $q \geq (sl+rm) \frac{\alpha}{\alpha - rm}$, we can choose $\gamma$ that satisfies
both (\ref{eq:cond.gamma1}) and (\ref{eq:cond.gamma2}); then we obtain the claimed estimate.
\end{proof}

Now we prove Lemma~\ref{lem1'}.

\begin{proof}
We shall prove that
\[ \langle |G_\lambda(\m, \n)|^s |V(\n) - \lambda|^s  \rangle_{\mathcal{B}_\n}
    \geq \hat{C}^{-1} \langle |G_\lambda(\m, \n)|^s  \rangle_{\mathcal{B}_\n} (1+|\lambda|)^s~. \]
First,
\[ \langle |G_\lambda(\m, \n)|^{s/2}  \rangle_{\mathcal{B}_\n}^2
    \leq \langle |G_\lambda(\m, \n)|^{s} |V(\n) - \lambda|^s \rangle_{\mathcal{B}_\n}
         \langle |V(\n) - \lambda|^{-s} \rangle_{\mathcal{B}_\n}~, \]
and, as above,
\[ \langle |V(\n) - \lambda|^{-s} \rangle_{\mathcal{B}_\n} \leq C (1+|\lambda|)^{-s}~.\]
Therefore it remains to show that
\begin{equation}\label{eq:revhoel}
\langle |G_\lambda(\m, \n)|^{s}  \rangle_{\mathcal{B}_\n}
    \leq C \langle |G_\lambda(\m, \n)|^{s/2}  \rangle_{\mathcal{B}_\n}^2~.
\end{equation}
For simplicity of notation, let $\mathcal{B}_\n = \{ v_1, \cdots, v_J \}$ (here $1 \leq J \leq k$). Cramer's
rule (or the Schur--Banachiewicz formula) shows that, as a function of every $v_j$,
$Q(v_1, \cdots, v_J) = G_\lambda(\m, \n)$ is a ratio of two polynomials
of degree at most $k$. The next multivariate version of Proposition~\ref{prop3} concludes the proof. 
\end{proof}

\begin{prop}
Let $Q$ be a function of $J$ variables $v_1, \cdots, v_J$ which, as a function of every $v_j$, is
a ratio of two polynomials of degree at most $k$. Then, for any probability measure $\mu$ satisfying the assumptions
{\bf A1}, {\bf A2} and for any $s \leq \frac{q\alpha}{k(q+2\alpha)}$,
\begin{multline*} \left\{ \int d\mu(v_1) \cdots d\mu(v_J) |Q(v_1, \cdots, v_J)|^s \right\}^{1/s}\\
	\leq C_{k,\alpha,q,C_{\bf A1},C_{\bf A2}} 
		\left\{ \int d\mu(v_1) \cdots d\mu(v_J) |Q(v_1, \cdots, v_J)|^{s/2} \right\}^{2/s}~.
\end{multline*}
\end{prop}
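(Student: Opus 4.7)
My plan is to prove this multivariate statement by induction on $J$, the number of variables.

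\textbf{Base case ($J=1$).} The statement is essentially a restatement of Proposition~\ref{prop3}. Writing $Q(v)$ as a ratio $\prod_j(v-a_j)/\prod_i(v-b_i)$ of linear factors (with at most $k$ in numerator and denominator), Proposition~\ref{prop3} applied with exponents $(s,s)$ and then with $(s/2,s/2)$ shows that both $\left(\int |Q|^s\, d\mu\right)^{1/s}$ and $\left(\int |Q|^{s/2}\, d\mu\right)^{2/s}$ are comparable, up to constants depending on $k,\alpha,q,C_{\bf A1},C_{\bf A2}$, to the common explicit expression $\prod_j(1+|a_j|)/\prod_i(1+|b_i|)$. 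The condition $s \leq q\alpha/(k(q+2\alpha))$ ensures the hypotheses of Proposition~\ref{prop3} are met for both exponents.

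\textbf{Inductive step.} Assume the statement for $J-1$ variables. Applied pointwise in $v_J$ to $Q(\cdot,v_J)$ as a function of the remaining $J-1$ variables, the inductive hypothesis combined with Fubini gives
\[
\int |Q|^s\, d\mu^{\otimes J} \;\leq\; C^s \int G(v_J)^2\, d\mu(v_J), \qquad G(v_J) := \int |Q(\cdot,v_J)|^{s/2}\, d\mu^{\otimes (J-1)}.
\]
The task is thus reduced to the reverse Cauchy--Schwarz bound $\int G^2\, d\mu(v_J) \leq C'\left(\int G\, d\mu(v_J)\right)^2$ in the single variable $v_J$.

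\textbf{Main obstacle.} The function $G(v_J)$ is obtained by integrating $|Q|^{s/2}$ over $J-1$ variables, and is not itself a ratio of polynomials in $v_J$, so Proposition~\ref{prop3} cannot be applied to $G$ directly. To handle this, I would exploit the two-sided nature of Proposition~\ref{prop3}, applied iteratively in $v_1,\ldots,v_{J-1}$ (pointwise in the remaining variables), to sandwich $G(v_J)$ between explicit expressions of the form $\prod_j (1+|a_j(v_J)|)^{s/2}$, where the $a_j(v_J)$ are algebraic functions of $v_J$ arising as roots of polynomials whose coefficients are themselves polynomial in $v_J$ of degree $\leq k$. Mahler-measure-type inequalities relate such products to the maximum absolute values of the coefficient polynomials, which allows a final application of Proposition~\ref{prop3} in $v_J$ to close the reverse Cauchy--Schwarz estimate. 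The delicate point is to track the effective polynomial degrees through these reductions and verify that the admissible threshold on $s$ is preserved; alternatively, it may be cleaner to prove directly a multivariate analogue of Proposition~\ref{prop3} asserting that $\left(\int |Q|^\tau d\mu^{\otimes J}\right)^{1/\tau}$ is, up to constants, independent of $\tau$ in the admissible range, from which the desired inequality is immediate.
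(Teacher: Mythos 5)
There is a genuine gap in your inductive step. After applying the induction hypothesis pointwise in $v_J$ you are left needing a \emph{reverse} Cauchy--Schwarz inequality $\int G^2\,d\mu(v_J)\le C\bigl(\int G\,d\mu(v_J)\bigr)^2$ for $G(v_J)=\int|Q(\cdot,v_J)|^{s/2}d\mu^{\otimes(J-1)}$, and, as you yourself note, $G$ is not a ratio of polynomials in $v_J$, so Proposition~\ref{prop3} does not apply to it. The proposed repair --- sandwiching $G$ by iterated two-sided applications of Proposition~\ref{prop3} in $v_1,\dots,v_{J-1}$ and then invoking Mahler-measure comparisons between root products and coefficient polynomials --- is only sketched, and it is far from routine: after one integration the comparable quantity is a ratio of \emph{maxima} (or sums) of several coefficient polynomials of $v_J$, not a single ratio of polynomials, so Proposition~\ref{prop3} still does not apply directly, lower bounds for such expressions are delicate, and the degree/threshold bookkeeping you flag as ``delicate'' is exactly where the argument is incomplete. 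Your alternative suggestion (prove a multivariate analogue of Proposition~\ref{prop3} directly) is essentially the statement to be proved, so it cannot be invoked.

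The paper closes the induction by ordering the steps the other way, which removes the need for any reverse inequality for integrated quantities. One first applies the case $J=1$ in the last variable, pointwise in $v_1,\dots,v_{J-1}$ (there $Q$ \emph{is} a ratio of polynomials of degree at most $k$ in $v_J$), getting
\[
\int |Q|^{s}\,d\mu^{\otimes J}\le C_1\int d\mu(v_1)\cdots d\mu(v_{J-1})\Bigl\{\int d\mu(v_J)\,|Q|^{s/2}\Bigr\}^{2},
\]
then expands the square by introducing an independent copy $v_J'$ of $v_J$, interchanges the order of integration, applies the ordinary (forward) Cauchy--Schwarz inequality to the inner integral over $v_1,\dots,v_{J-1}$, and only then uses the induction hypothesis at fixed $v_J$ and $v_J'$ to replace each factor $\bigl(\int|Q(\cdot,v_J)|^{s}d\mu^{\otimes(J-1)}\bigr)^{1/2}$ by $C\int|Q(\cdot,v_J)|^{s/2}d\mu^{\otimes(J-1)}$; refolding the product of the two integrals gives the square of $\int|Q|^{s/2}d\mu^{\otimes J}$. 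The duplication-of-the-variable trick is the missing idea in your proposal: it ensures that the reverse-H\"older input (Proposition~\ref{prop3}) is only ever applied to genuine single-variable ratios of polynomials, never to functions obtained by integrating out other variables. If you want to salvage your write-up, restructure the inductive step along these lines; the Mahler-measure route would require substantial new estimates that you have not supplied.
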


\begin{proof}
If $J=1$, (\ref{eq:revhoel}) follows from Proposition~\ref{prop3}. Then we proceed
by induction on $J$. By case $J=1$,
\[\begin{split}
&\int d\mu(v_1) \cdots d\mu(v_J) |Q(v_1, \cdots, V_J)|^{s} \\
&\quad\leq C_1 \int d\mu(v_1) \cdots d\mu(v_{J-1}) \left\{ \int d\mu(v_J) |Q(v_1, \cdots, v_J)|^{s/2} \right\}^2\\
&\quad=C_1\int d\mu(v_1) \cdots d\mu(v_{J-1}) \\
    &\qquad\int d\mu(v_J) |Q(v_1, \cdots, v_J)|^{s/2} \int d\mu(v_J') |Q(v_1, \cdots, v_J')|^{s/2} \\
&\quad= C_1\int d\mu(v_J)d\mu(v_J') \int d\mu(v_1) \cdots d\mu(v_{J-1}) \\
    &\qquad|Q(v_1, \cdots, v_J)|^{s/2} |Q(v_1, \cdots, v_J')|^{s/2}
\end{split}\]
By the Cauchy--Schwarz inequality and the induction step, the last expression is at most
\[\begin{split}
&C_1 \int d\mu(v_J)d\mu(v_J')  \\
&\quad    \left\{ \int d\mu(v_1) \cdots d\mu(v_{J-1}) |Q(v_1, \cdots, v_J)|^{s}
        \int d\mu(v_1') \cdots d\mu(v_{J-1}') |Q(v_1', \cdots, v_J')|^{s} \right\}^{1/2} \\
&\leq C_2 \int d\mu(v_J)d\mu(v_J')  \qquad\qquad \\
&\quad    \int d\mu(v_1) \cdots d\mu(v_{J-1}) |Q(v_1, \cdots, v_J)|^{s/2}
    \int d\mu(v_1') \cdots d\mu(v_{J-1}') |Q(v_1', \cdots, v_J')|^{s/2} \\
&\qquad= C_2 \left\{ \int d\mu(v_1) \cdots d\mu(v_J) |Q(v_1, \cdots, v_J)|^{s/2} \right\}^2~.
\end{split}\]
\end{proof}

\section{Dynamical localisation}\label{s:dyn}

The proof of dynamical localisation is based on the notion of eigenfunction correlators, introduced
by Aizenman in \cite{A}.

Let us start with some definitions, which are adjusted from the lecture notes of Aizenman and Warzel \cite{AW}
to our block setting. Let $H$ be an operator acting on $\ell^2(\mathbb{Z}^d)\otimes \mathbb{C}^k$.
For ${\bf m}, {\bf n} \in \mathbb{Z}^d$, the (matrix-valued) spectral measure $\mu_{{\bf mn}}$ is defined by
\[ \int \phi \, d\mu_{\bf mn} = \phi(H)({\bf m,n})~, \quad \phi \in C_0(\mathbb{R})~.\]
The eigenfunction correlator $Q_I({\bf m,n})$ corresponding to a finite interval $I\subset \mathbb{R}$
(on the energy axis) is defined by
\[ Q_I({\bf m,n}) = \sup \left\{ \| \phi(H)({\bf m,n}) \| \, \mid \,
  \mathrm{supp}\, \phi \subset I, \, |\phi| \leq 1 \right\}~. \]
Obviously,
\begin{equation}\label{eq:obv}
\sup_{t \geq 0} |e^{itH_I}(\m,\n)| \leq Q_I(\m, \n)
\end{equation}
for any $t>0$.

The eigenfunction correlators can be also defined for the restriction of $H$ to a finite box $\Lambda$
(we denote this restriction by the superscript $\Lambda$). In this case, it satisfies the following
inequalities (the first one is an equality in the scalar case, cf.\ \cite{AW}):
\begin{lemma}
\[ Q_I^\Lambda(\m,\n)
    \leq \lim_{\epsilon \to +0} \frac{\epsilon}{2}
        \int_I \|G^\Lambda_\lam0(\m, \n)\|^{1-\epsilon} d\lambda
    \leq k~. \]
\end{lemma}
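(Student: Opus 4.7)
I plan to diagonalise $H^\Lambda$ on the finite box $\Lambda$ and pass through an auxiliary sum of residue norms. Let $H^\Lambda = \sum_j E_j P_j$ be the spectral decomposition with distinct eigenvalues $\{E_j\}$ and orthogonal eigenprojectors $\{P_j\}$ summing to the identity. Then
\[ G^\Lambda_{\lambda+i0}(\m,\n) = \sum_j \frac{P_j(\m,\n)}{E_j - \lambda} \]
is rational in $\lambda$ with simple poles at the $E_j$, and for real $\lambda$ not in the spectrum it coincides with $G^\Lambda_\lambda$. Introduce the auxiliary quantity
\[ S := \sum_{E_j \in I} \|P_j(\m,\n)\|, \]
and establish the chain $Q_I^\Lambda(\m,\n) \le S \le \lim_{\epsilon \to 0^+}\tfrac{\epsilon}{2}\int_I \|G^\Lambda_{\lambda+i0}(\m,\n)\|^{1-\epsilon}\, d\lambda \le S \le k$.

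The leftmost bound $Q_I^\Lambda(\m,\n) \le S$ is immediate from $\phi(H^\Lambda)(\m,\n) = \sum_{E_j \in I} \phi(E_j) P_j(\m,\n)$ whenever $\operatorname{supp}\phi \subset I$, together with the triangle inequality and $|\phi|\le 1$; take the supremum. The rightmost bound $S \le k$ follows from the projection estimate $\|P_j(\m,\n)\|_{\mathrm{HS}}^2 \le \operatorname{tr} P_j(\m,\m)\cdot \operatorname{tr} P_j(\n,\n)$, which I obtain by expanding $P_j = \sum_\alpha |\psi_{j,\alpha}\rangle\langle \psi_{j,\alpha}|$ over an orthonormal basis of its range and applying Cauchy--Schwarz in the basis index; then $\|\cdot\|\le \|\cdot\|_{\mathrm{HS}}$ and Cauchy--Schwarz over $j$ yield $S \le \sqrt{\sum_j \operatorname{tr} P_j(\m,\m)}\cdot\sqrt{\sum_j \operatorname{tr} P_j(\n,\n)} = k$, using $\sum_j P_j = \mathbb{1}$ so that $\sum_j \operatorname{tr} P_j(\n,\n) = k$.

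For the upper middle bound $\lim \le S$, apply subadditivity $(\sum a_j)^{1-\epsilon} \le \sum a_j^{1-\epsilon}$ (valid since $a_j \ge 0$ and $0 < 1-\epsilon \le 1$) to the expansion above:
\[ \|G^\Lambda_\lambda(\m,\n)\|^{1-\epsilon} \le \sum_j \frac{\|P_j(\m,\n)\|^{1-\epsilon}}{|E_j - \lambda|^{1-\epsilon}}, \]
and use the elementary identity $\tfrac{\epsilon}{2}\int_I |E_j - \lambda|^{-(1-\epsilon)} d\lambda \to 1$ for $E_j \in \operatorname{int} I$ and $\to 0$ for $E_j \notin \overline{I}$ (as $\int_{-\delta}^\delta |t|^{-(1-\epsilon)}dt = 2\delta^\epsilon/\epsilon$). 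For the matching lower middle bound $S \le \lim$, localise near each pole $E_j \in \operatorname{int} I$: write $G^\Lambda_\lambda(\m,\n) = P_j(\m,\n)/(E_j - \lambda) + R_j(\lambda)$ with $R_j$ bounded on a fixed neighbourhood of $E_j$, so that the reverse triangle inequality yields $\|G^\Lambda_\lambda(\m,\n)\| \ge (1-\eta)\|P_j(\m,\n)\|/|E_j - \lambda|$ on a sufficiently small interval $|E_j - \lambda| < \delta_j(\eta)$. Then the contribution of this neighbourhood is at least $(1-\eta)^{1-\epsilon}\|P_j(\m,\n)\|^{1-\epsilon}\delta_j(\eta)^\epsilon \to (1-\eta)\|P_j(\m,\n)\|$ as $\epsilon \to 0$; summing over the disjoint neighbourhoods and then letting $\eta \to 0$ gives the bound.

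The main obstacle is the careful accounting of eigenvalues on $\partial I$, which contribute only $\tfrac{1}{2}\|P_j(\m,\n)\|$ to the limit and thus would spoil the equality $\lim = S$; this is a measure-zero phenomenon and does not affect the two-sided inequalities of the lemma, being handled either by restricting to intervals whose endpoints are not eigenvalues of $H^\Lambda$ or by a routine approximation. Another minor book-keeping point is that the $\delta_j(\eta)$ must be chosen small enough that the intervals $(E_j - \delta_j, E_j + \delta_j)$ are pairwise disjoint and contained in $I$, which is automatic for small enough $\delta_j$ since $H^\Lambda$ has finitely many eigenvalues.
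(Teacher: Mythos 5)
Your proof is correct and follows essentially the same approach as the paper: the auxiliary quantity $S=\sum_{E_j\in I}\|P_j(\m,\n)\|$ is precisely the paper's $\sum_{\nu\in I}\|M_\nu\|$, the leftmost inequality is the triangle inequality, the middle equality is obtained by localising the resolvent near its poles, and the bound by $k$ is Cauchy--Schwarz plus completeness. You are somewhat more explicit than the paper about the two directions of the middle equality (subadditivity of $t\mapsto t^{1-\epsilon}$ for the upper bound, the reverse triangle inequality near each pole for the lower bound) and about the boundary-eigenvalue caveat, but these are elaborations of the same argument rather than a different route.
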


\begin{proof}
For any eigenvalue $\nu$ of $H^\Lambda$, define a $k \times k$ matrix
\[ M_\nu = \sum \psi(\m) \otimes \psi(\n): u \mapsto \sum (\psi(\n) \cdot u) \psi(\m)~, \]
where the sum is over all eigenfunctions $\psi$ of $H^\Lambda$ corresponding to $\nu$. Then
\[ \phi(H^\Lambda)(\m, \n) = \sum_{\nu\in I} \phi(\nu) M_\nu~, \]
whereas
\[ G_\lambda^\Lambda(\m, \n) = \sum_{\nu} \frac{M_\nu}{\nu - \lambda} \]
(where now the sum is over all eigenvalues of $H^\Lambda$.)
Therefore
\[ \| \phi(H^\Lambda)(\m, \n)\| \leq \sum_{\nu \in I} \|M_\nu\|
    = \lim_{\epsilon \to +0} \frac{\epsilon}{2} \int_I \|G_\lambda^\Lambda(\m,\n)\|^{1-\epsilon} d\lambda~. \]
The equality can be proved by representing $I = \uplus I_\nu$ as a disjoint union
of neighbourhoods of $\nu \in I$, and noting than
\[ G_\lambda^\Lambda(\m, \n) = \frac{M_\nu}{\nu - \lambda} + O(1), \quad \lambda \to \nu~.\]
Also,
\[\begin{split}
\sum_{\nu \in I} \|M_\nu\|
    &\leq \sum_\psi \|\psi(\m) \otimes \psi(\n)\|
    = \sum_\psi \|\psi(\m)\| \| \psi(\n) \| \\
    &\leq \left\{ \sum_\psi \|\psi(\m)\|^2 \sum_\psi \|\psi(\n)\|^2 \right\}^{1/2} = k~.
\end{split}\]
\end{proof}

Now assume that $H = H_\omega$ is a random operator of the form (\ref{eq:defH}), where the random
$V({\bf m})$ are independent and identically distributed, $K(\m,\n)$ depends only on $\m-\n$.
We shall prove

\begin{thm}\label{thm:dyn} Let $0 < s < 1$, and suppose for every box $\Lambda$, every $\lambda \in I$,
and every $\m,\n \in \Lambda$
\[ \langle \|G^\Lambda_{\lambda+i0}(\m,\n)\|^s \rangle \leq C \exp(- \gamma \, \dist(\m, \n) ) \]
for some $C, \gamma > 0$, where $G^\Lambda$ is the resolvent of the restriction of $H$ to
$\Lambda$. Then, for every $\m, \n \in \mathbb{Z}^d$,
\[ \langle Q_I(\m, \n) \rangle \leq C' \dist^{2d}(\m,\n) \exp(- \frac{s\gamma}{8} \dist(\m,\n))~.\]
\end{thm}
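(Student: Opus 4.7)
The strategy is to pass first to finite volume: for any finite box $\Lambda \supset \{\m,\n\}$, bound $\langle Q_I^\Lambda(\m,\n) \rangle$ with constants uniform in $\Lambda$, and then let $\Lambda \nearrow \mathbb{Z}^d$. The approximation of $Q_I(\m,\n)$ by $Q_I^\Lambda(\m,\n)$ is fairly standard, reducing to strong convergence of bounded Borel functions of $H^\Lambda$ applied to vectors localised at $\m,\n$.

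The main tool is the preceding Lemma, which provides two complementary bounds on the finite-volume correlator: the deterministic bound $Q_I^\Lambda(\m,\n) \leq k$, and the spectral-integral bound $Q_I^\Lambda(\m,\n) \leq \lim_{\epsilon \to 0^+} \tfrac{\epsilon}{2} \int_I \|G^\Lambda_{\lambda + i0}(\m,\n)\|^{1-\epsilon} d\lambda$. The difficulty is that the hypothesis controls only the $s$-th fractional moment of $\|G^\Lambda\|$, while the integrand carries the near-unit exponent $1-\epsilon$; the two ingredients cannot be combined directly.

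One concrete route is via the spectral decomposition $G^\Lambda_{\lambda+i0} = \sum_\nu M_\nu/(\nu-\lambda)$ on the finite box. Subadditivity of $t \mapsto t^s$ gives $\|G^\Lambda_{\lambda+i0}\|^s \leq \sum_\nu \|M_\nu\|^s/|\nu-\lambda|^s$. Integrating over $\lambda \in I$ and applying Fubini together with the hypothesis then yields the $s$-moment bound $\sum_{\nu \in I} \langle \|M_\nu(\m,\n)\|^s \rangle \leq C |I|^s e^{-\gamma \dist(\m,\n)}$. Since $\|M_\nu(\m,\n)\| \leq k$ deterministically (from Cauchy--Schwarz on normalised eigenfunctions, as in the proof of the Lemma), we can upgrade to $\sum_\nu \|M_\nu\| \leq k^{1-s} \sum_\nu \|M_\nu\|^s$, and this gives exponential decay of $\langle Q_I^\Lambda(\m,\n) \rangle$. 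Combined with (\ref{eq:obv}), this yields the dynamical localisation estimate.

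The hard part will be ensuring that the delicate limiting operations are legitimate---interchanging $\epsilon \to 0^+$ with the $\lambda$-integration and with expectation, and the final $\Lambda \nearrow \mathbb{Z}^d$ passage. I also note that the spectral sketch above appears to yield the cleaner rate $\gamma$ with only a mild prefactor, whereas the statement has the weaker rate $s\gamma/8$ and the prefactor $\dist(\m,\n)^{2d}$; this suggests the authors prefer a cruder but more robust interpolation---perhaps a Chebyshev/layer-cake argument that combines the two Lemma bounds and pays a union bound over sites in a box of volume $\sim \dist^d$---which is what I would fall back on if the spectral route ran into trouble with the limits.
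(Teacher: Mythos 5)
Your main ``spectral sketch'' has a genuine logical gap, and the fallback you gesture at does not identify the key idea the paper actually uses.

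The gap is in the step where you claim the hypothesis, Fubini, and subadditivity yield $\sum_{\nu\in I}\langle\|M_\nu(\m,\n)\|^s\rangle \leq C|I|^s e^{-\gamma\dist(\m,\n)}$. Subadditivity gives $\|G_{\lambda+i0}^\Lambda\|^s \leq \sum_\nu \|M_\nu\|^s/|\nu-\lambda|^s$, i.e.\ it bounds $\|G\|^s$ \emph{from above} by the sum; after integrating over $\lambda$ and taking expectations, the hypothesis controls the \emph{left}-hand side, not the right. To conclude a bound on $\sum_\nu\|M_\nu\|^s$ you would need the inequality in the opposite direction, and for a fixed exponent $s<1$ there is no such converse: near $\nu$ you only have $\|G_\lambda\| \geq \|M_\nu\|/|\nu-\lambda| - \|R_\nu(\lambda)\|$, where $R_\nu$ contains the contribution of the other eigenvalues and can be large when the spectrum clusters. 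The paper's Lemma works precisely because the $\epsilon\to 0^+$ limit with exponent $1-\epsilon$ concentrates the $\lambda$-integral on an infinitesimal neighbourhood of each $\nu$, killing the off-diagonal contamination; replacing $1-\epsilon$ by a fixed $s<1$ destroys that mechanism. The fact that your sketch produces the ``improved'' rate $\gamma$ in place of $s\gamma/8$ and no polynomial prefactor is itself a warning sign: what you would be proving is essentially an exponentially-decaying Minami/Wegner-type bound on $\sum_{\nu\in I}\|M_\nu\|^s$, which is a much stronger statement than the hypothesis and does not follow from it.

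The paper's proof has a different structure that you do not touch on. It introduces two boxes $\Lambda_\m,\Lambda_\n$ of radius $R=\lfloor\dist(\m,\n)/2\rfloor$ around $\m$ and $\n$, uses the geometric resolvent identity to write $G^\Lambda_{\lambda+i0}(\m,\n)$ through the boundary of $\Lambda_\m$, and then invokes a probabilistic input (a proposition from Elgart--Tautenhahn--Veseli\'c) which, with probability at least $1-C'R^{2d}e^{-\gamma s R/8}$, supplies a decomposition $I=I_\m\cup I_\n$ such that for $\lambda\in I_\m$ the local resolvent $G^{\Lambda_\m}_{\lambda+i0}(\m,\cdot)$ is exponentially small on $\partial\Lambda_\m$, and symmetrically for $I_\n$. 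This ``resonance-diffusing'' decomposition of the energy interval is the crucial step your proposal is missing; the $\epsilon\to 0^+$ correlator bound is then applied separately over $I_\m$ and $I_\n$ on the good event, and the deterministic bound $Q_I\le k$ is used on the bad event. This is precisely what produces the $\dist^{2d}$ prefactor and the rate $s\gamma/8$, and is neither a Chebyshev/layer-cake argument nor a direct interpolation between the two Lemma bounds.
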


\begin{rmk*}
A similar statement can be proved for potentials of the form (\ref{eq:alloy}).
\end{rmk*}

\begin{proof} We shall prove the estimate in a large box $\Lambda$ containing $\m,\n$ (uniformly in
the size of $\Lambda$). Let $\Lambda_\m$ and $\Lambda_\n$ be two boxes of radius
$R = \lfloor\dist(\m,\n)/2\rfloor$, centered at $\m$,$\n$, respectively. According to the resolvent identity,
\[ G_\lam0^\Lambda(\m, \n) = g^{-1} \sum_{\w\w' \in \partial \Lambda_\m}
  G_\lam0^{\Lambda_\m}(\m,\w) K(\w,\w') G_\lam0^\Lambda(\w', \n)~,\]
where the sum is over all pairs $\w\w'$ such that $\w \in \Lambda_\m$, $\w' \notin \Lambda_\m$, $\w\sim\w'$. Therefore
\[ \|G_\lam0^\Lambda(\m, \n)\|
  \leq C g^{-1} \max_{\w\w' \in \partial \Lambda_\m} \|G_\lam0^{\Lambda_\m}(\m,\w)\|
            \sum_{\w\w' \in \partial \Lambda_\m} \|G_\lam0^\Lambda(\w', \n)\|~. \]
Now we apply \cite[Prop.~5.1]{ETV} (which holds in the block-operator setting). It shows that,
with probability at least $1 - C' R^{2d}\exp(- \gamma s R/8 )$, one can decompose $I = I_\m \cup I_\n$
so that for every $\w\w' \in \partial \Lambda_\m$ and $\lambda \in I_\m$
\[ \max_{\w\w' \in \partial \Lambda_\m} \|G_\lam0^{\Lambda_\m}(\m,\w)\| \leq C \exp(-\gamma R/8)~, \]
and for every $\w\w' \in \partial \Lambda_\n$ and $\lambda \in I_\n$
\[ \max_{\w\w' \in \partial \Lambda_\n} \|G_\lam0^{\Lambda_\n}(\n,\w)\| \leq C \exp(-\gamma R/8)~.\]
Therefore,
\[ \lim_{\epsilon \to +0} \frac\epsilon2 \int_{I_\m} \|G_\lam0^\Lambda(\m, \n)\|^{1-\epsilon}  d\lambda
  \leq C'' g^{-1}\exp(-R\gamma/8) R^{d-1}~, \]
and the same estimate holds for the integral over $I_\n$. Therefore finally
\[\begin{split}
\langle Q_I(\m, \n) \rangle &\leq C^{IV} g^{-1} R^{d-1} \exp(-R\gamma/8) + C'k R^{2d}\exp(- \gamma s R/8 ) \\
  &\leq C^V R^{2d}\exp(- \gamma s R/8 )~.
\end{split}\]
\end{proof}

\end{document}